\documentclass{comjnl}

\usepackage{graphicx}
\usepackage{cite}
\usepackage{amsmath}
\usepackage{amssymb}
\usepackage{array}
\usepackage{color, xcolor}
\usepackage{cite}
\usepackage{multirow}
\usepackage[flushleft]{threeparttable}
\usepackage{epstopdf}
\usepackage[T1]{fontenc}

\usepackage{aecompl}

\hyphenation{op-tical net-works semi-conduc-tor}



\newcommand{\A}{\ensuremath{\mathbf{A}}}

\begin{document}
\title{Resistance Distances in Simplicial Networks}
\author{Mingzhe~Zhu} 
\author{Wanyue~Xu} 
\author{Zhongzhi~Zhang} \email{zhangzz@fudan.edu.cn}
\author{Haibin~Kan} 
\affiliation{Shanghai Key Laboratory of Intelligent Information
	Processing \& Shanghai Engineering Research Institute of Blockchain, School of Computer Science, Fudan University, Shanghai 200433, China}
\author{Guanrong~Chen} 
\affiliation{Department of Electrical Engineering, City University of Hong Kong, Hong Kong SAR,  China}
\shortauthors{M. Zhu, W. Xu, Z. Zhang, H. Kan and G. Chen}

\keywords{Effective resistance, Kirchhoff index, simplicial network, simplicial complex, higher-order organization, scale-free network}

\begin{abstract}
It is well known that in many real networks, such as brain networks and scientific collaboration networks, there exist higher-order nonpairwise relations among nodes, i.e., interactions between among than two nodes at a time. This simplicial structure can be described by simplicial complexes and has an important effect on topological and dynamical properties of networks involving such group interactions. In this paper, we study analytically resistance distances in iteratively growing networks with higher-order interactions characterized by the simplicial structure that is controlled by a parameter $q$. We derive exact formulas for interesting quantities about resistance distances, including Kirchhoff index, additive degree-Kirchhoff index, multiplicative degree-Kirchhoff index, as well as average resistance distance, which have found applications in various areas elsewhere. We show that the average resistance distance tends to a $q$-dependent constant, indicating the impact of simplicial organization on the structural robustness measured by average resistance distance.
\end{abstract}

\maketitle

\section{Introduction}\label{sec:introduction}

Complex networks have become a powerful tool for studying complex systems, whose nodes represent elements and edges describe their interactions~\cite{Ne03,Ba16}. In most existing work, complex systems are studied by using network models that only capture pairwise relationship among system elements. However, it was  shown~\cite{BeAbScJaKl18,SaCaDaLa18} that in many realistic scenarios, the system structure involves interactions taking place among more than two entities at a time~\cite{BeGlLe16, GrBaMiAl17, LaPeBaLa19}, which are often called higher-order interactions or simplicial interactions. For example, in a scientific collaboration network~\cite{PaPeVa17}, the $q$ authors of a paper form  a $q$-clique, which cannot be described by pairwise interactions, but  is more adequate to be represented by simplicial structure. Other examples involving simplicial interactions include correlations in neuronal spiking activities~\cite{GiPaCuIt15,ReNoScect17}, interactions among proteins~\cite{WuOtBa03}, and so on.

The higher-order interactions in complex systems can be modeled via a collection of simplicial complexes~\cite{CoBi17,PeBa18}.  As generalized network structures, simplicial complexes are not only formed by nodes and links but also by triangles, tetrahedra, and other cliques. They have thus become popular to model complex systems involving interactions among groups~\cite{GiGhBa16, CoBi17,PeBa18, CoBi18, daBiGiDoMe18,LaPeBaLa19,QiYiZh19}. The models generated by simplicial complexes can display simultaneously scale-free behavior, small-world property, and finite spectral dimensions~\cite{BiRa17}. In addition to describing higher-order interactions of complex systems, simplicial structure can also be adopted to study the impact of nonpairwise interactions on collective dynamics of such systems. Recently, it was demonstrated that even three-way interactions can give rise to a host of novel phenomena that are unexpected if only pairwise interactions are considered, for example, Berezinkii-Kosterlitz-Thouless percolation transition~\cite{BiZi18}, abrupt desynchronization~\cite{SkAr19}, and abrupt phase transition of epidemic spreading~\cite{MaGoAr20}.

A simplicial complex is a collection of simplices glued between their faces, where a $q$-dimensional simplex is a $(q+1)$-clique. More recently, inspired by simplicial complexes, a deterministic network model was proposed to describe higher-order interactions, based on an edge operation~\cite{WaYiXuZh19}. Given a graph, the edge operation is defined as follows: for each edge, create a $q$-clique and then connect all $q$ nodes to both ends of the edge. Iteratively using this edge operation to a $(q+2)$-clique generates a model for complex networks with higher-order interactions characterized by $q$. Since the resulting networks consist of simplexes, they are called simplicial networks. They exhibit similar properties as simplicial complexes, include scale-free small-world characteristics and a finite $q$-dependent spectral dimension, highlighting the role of simplicial interactions. Moreover, its normalized Laplacian spectrum can be explicitly determined. It thus serves as an exactly solvable model for simplicial interactions, on which dynamical processes (e.g., random walks) can be studied analytically to shed light on the effect of simplicial interactions.

In this paper, we provide an in-depth study on the properties of resistance distances in simplicial networks, which have found a large variety of applications and thus attracted considerable attention~\cite{SpSr11,DoBu12,YoScLe15,ThYaNa18,DoSibu18,ShZh19,SoHiLi19}. We first formulate recursive expressions for some related matrices, based on which we derive  evolution relations of two-node resistance distances, expressing the resistance distance between two nodes in the current network in terms of those of the node pairs in the previous generation. We then provide explicit expressions for Kirchhoff index, additive degree-Kirchhoff index, and multiplicative degree-Kirchhoff index for simplicial networks. We show that the average resistance distance converges to a $q$-dependent constant.  Thus, when studying complex systems with higher-order organizations, it is necessary to take into account their simplicial interactions.  This work provides rich insights for understanding real systems with simplicial interactions.
$\mathcal{G}_q(t)$ can also be considered as a pure $(q+1)$-dimensional simplicial complex, by looking upon each $(q+1)$-simplex and all its faces as the constituent  simplices.
The properties and high-order interaction of a simplicial complex $\mathcal{K}$ of $N$ nodes can be characterized by a corresponding weighted graph $\mathcal{G'}$~\cite{CoBi16,CoBi17,CaBaCeFa20}, which is obtained from the underlying graph $\mathcal{G}$ of $\mathcal{K}$ by assigning an approximate weight to each edge in $\mathcal{G}$.

\section{Preliminaries}\label{RanWalk}

In this section, we give a brief introduction to some basic concepts related to graphs,  graph Laplacian, and resistance distances.  

\subsection{Graph and Matrix Notation}

Let $\mathcal{G}=(\mathcal{V},\,\mathcal{E})$ be a connected graph with node set $\mathcal{V}$ and edge set~$\mathcal{E} \subset \mathcal{V}\times \mathcal{V}$, for which the number of nodes is $N = |\mathcal{V}|$ and the number of edges is $M=|\mathcal{E}|$. Then, the mean degree of all nodes in $\mathcal{G}$ is $2M /N$.
The $N$ nodes in graph $\mathcal{G}$ will be labeled  by $1,2,3,\ldots, N$.

The  adjacency matrix $A=(a_{ij})_{N \times N}$ of $\mathcal{G}$ captures the adjacency relation among the $N$ nodes, where the entry at row $i$ and column $j$ is defined as follows: $a_{ij}=1$ if the two nodes $i$ and $j$ are connected by an edge in $\mathcal{G}$, and $a_{ij} = 0$ otherwise. For a node $i \in \mathcal{G}$, let $\mathcal{N}(i) = \{x|(x,i)\in \mathcal{E}\}$ denote the set of its neighboring nodes. Then, the degree of node $i$ is $d_i=\sum_{j  \in \mathcal{N}(i)} a_{ij}= \sum_{j=1}^{N} a_{ij} $. Let $D$ denote the diagonal degree matrix of $\mathcal{G}$, where the $i$th diagonal entry is the degree of node $i$, denoted as $d_i$, while all other entries are zeros. The Laplacian matrix of $\mathcal{G}$ is defined as $L = D - A$. 

For a real symmetric square matrix  $X$, not necessary invertible, one can define its $\{1\}-$inverse~\cite{Ti94}. Matrix $M$ is called a $\{1\}-$inverse of $X$, if and only if $XMX = X$.
Let $X^{\dag}$ denote a $\{1\}-$inverse of $X$. The following lemma gives a $\{1\}-$inverse of a block matrix $X$~\cite{Li19}.
\begin{lemma}\label{LemmaBlock1inv}
	For a block matrix
	$X = \left(
	\begin{array}{cc}
	A & B\\
	B^{\top} & C \\
	\end{array}
	\right)
	$,
	where $C$ is nonsingular, if there exists a $\{1\}$-inverse $S^{\dag}$ with $S = A-BC^{-1}B^{\top}$,
	then
	\begin{align}
	X^{\dag} = \left(
	\begin{array}{cc}
	S^{\dag} & -S^{\dag}BC^{-1}\\
	-C^{-1}B^{T}S^{\dag} & C^{-1}B^TS^{\dag}BC^{-1} + C^{-1}
	\end{array}
	\right)\notag
	\end{align}
	is a $\{1\}$-inverse of $X$.
\end{lemma}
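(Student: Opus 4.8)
The statement to establish is simply $X X^{\dag} X = X$, i.e.\ that the displayed matrix is a $\{1\}$-inverse of $X$. My plan is to avoid any brute-force multiplication of the full candidate $X^{\dag}$ against $X$ on both sides, and instead to exploit the Schur-complement factorization of $X$. Since $C$ is nonsingular, one can write
\[
X = \begin{pmatrix} I & BC^{-1} \\ 0 & I \end{pmatrix} \begin{pmatrix} S & 0 \\ 0 & C \end{pmatrix} \begin{pmatrix} I & 0 \\ C^{-1}B^{\top} & I \end{pmatrix} =: LEU,
\]
where $S = A - BC^{-1}B^{\top}$. A quick expansion of the right-hand side recovers the block $S + BC^{-1}B^{\top} = A$ in the top-left corner and $B$, $B^{\top}$, $C$ in the remaining corners. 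The outer factors $L$ and $U$ are unit block-triangular, hence invertible, with inverses obtained by negating the off-diagonal block.

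The key reduction is the general fact that being a $\{1\}$-inverse is stable under conjugation by invertible matrices: if $E E^{\dag} E = E$, then $U^{-1} E^{\dag} L^{-1}$ is a $\{1\}$-inverse of $LEU$, since $LEU \cdot U^{-1}E^{\dag}L^{-1} \cdot LEU = L (E E^{\dag} E) U = LEU$. So it suffices to exhibit a $\{1\}$-inverse of the block-diagonal middle factor $E$. I take $E^{\dag} = \mathrm{diag}(S^{\dag}, C^{-1})$; because $S S^{\dag} S = S$ holds by hypothesis and $C C^{-1} C = C$ by nonsingularity, one checks at once that $E E^{\dag} E = E$.

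It then remains only to expand $U^{-1} E^{\dag} L^{-1}$ into its four blocks. Computing first $E^{\dag} L^{-1}$ and then left-multiplying by $U^{-1}$ reproduces exactly the entries $S^{\dag}$, $-S^{\dag}BC^{-1}$, $-C^{-1}B^{\top}S^{\dag}$, and $C^{-1}B^{\top}S^{\dag}BC^{-1} + C^{-1}$ of the asserted matrix, which finishes the argument.

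I do not expect a genuine obstacle here, as the argument is a verification rather than a discovery; the entire content is the factorization $X = LEU$ plus the conjugation-stability observation. The one point demanding care is that $S^{\dag}$ is only a $\{1\}$-inverse and \emph{not} a true inverse, so at no stage may one invoke $S S^{\dag} = I$ or $S^{\dag} S = I$; the computation must be arranged so that $S^{\dag}$ enters solely through its defining relation $S S^{\dag} S = S$. A secondary bookkeeping matter is keeping the transpose placements consistent, since $X$ is symmetric whereas $S^{\dag}$ need not be, so the resulting $X^{\dag}$ is a one-sided $\{1\}$-inverse and not claimed to be symmetric.
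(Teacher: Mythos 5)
Your proof is correct. Note that the paper does not actually prove this lemma at all --- it is imported verbatim from the cited reference \cite{Li19} and used as a black box --- so there is no in-paper argument to compare against; your proposal supplies a self-contained verification that the paper omits. The route you chose is the natural one and every step checks out: the factorization
\begin{equation*}
X=\begin{pmatrix} I & BC^{-1}\\ 0 & I\end{pmatrix}
\begin{pmatrix} S & 0\\ 0 & C\end{pmatrix}
\begin{pmatrix} I & 0\\ C^{-1}B^{\top} & I\end{pmatrix}=LEU
\end{equation*}
is valid precisely because $C$ is nonsingular; the observation that $U^{-1}E^{\dag}L^{-1}$ is a $\{1\}$-inverse of $LEU$ whenever $EE^{\dag}E=E$ follows from the one-line cancellation $LE(UU^{-1})E^{\dag}(L^{-1}L)EU=L(EE^{\dag}E)U$; the block-diagonal middle factor admits $\mathrm{diag}(S^{\dag},C^{-1})$ as a $\{1\}$-inverse using only $SS^{\dag}S=S$ and $CC^{-1}C=C$; and the final expansion of $U^{-1}\,\mathrm{diag}(S^{\dag},C^{-1})\,L^{-1}$ reproduces all four blocks of the asserted $X^{\dag}$ exactly. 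You are also right to flag that $SS^{\dag}=I$ must never be invoked and that the result is not claimed to be symmetric --- the definition $XX^{\dag}X=X$ is all that is required, and it is all you use.
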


\subsection{Resistance Distances and Graph Invariants}

For any graph $\mathcal{G}=(\mathcal{V},\,\mathcal{E})$, if we replace each edge in $\mathcal{E}$ by a unit resistor, we obtain an electrical network~\cite{DoSn84}. For a pair of nodes $i$ and $j$ ($i \neq j$) of $\mathcal{G}$,
the effective resistance $\Omega_{ij}$ between them is defined as the potential difference between $i$ and $j$ when a unit current from $i$ to $j$ is maintained in the corresponding electrical network. In the case $i = j$, $\Omega_{ij}$ is defined to be zero. The  effective resistance $\Omega_{ij}$  is called the resistance distance~\cite{KlRa93} between $i$ and $j$  of  graph  $\mathcal{G}$.

For a graph $\mathcal{G}$, the effective resistance between any node pair  can be represented in terms of  the elements of any \{1\}\textendash inverse of its Laplacian matrix~\cite{Ba99}.
\begin{lemma}\label{efpro1}
	For a  graph $\mathcal{G}$, let $L^{\dagger}_{ij}$ denote the $(i,j)$th entry of any \{1\}\textendash inverse $L^{\dagger}$ of its Laplacian matrix $L$. Then, for any pair of nodes $i,j\in \mathcal{V}$, the effective resistance $\Omega_{ij}$ can be obtained from the elements of $L^{\dagger}$ as
	\begin{equation}
	\Omega_{ij}=L^{\dagger}_{ii}+L^{\dagger}_{jj}-L^{\dagger}_{ij}-L^{\dagger}_{ji}.
	\end{equation}
\end{lemma}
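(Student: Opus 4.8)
The plan is to connect the physical definition of effective resistance with the algebra of the {1}-inverse through Kirchhoff's current law and Ohm's law. First I would translate the defining experiment---injecting a unit current into node $i$ and extracting it at node $j$---into a linear system. Writing $\mathbf{c} = \mathbf{e}_i - \mathbf{e}_j$ for the vector of net currents injected at each node (where $\mathbf{e}_k$ denotes the $k$th standard basis vector), the conservation of current at each node together with the relation $L\mathbf{v}$ between nodal potentials $\mathbf{v}$ and net injected currents on a network of unit resistors yields the system $L\mathbf{v} = \mathbf{c}$. By definition, $\Omega_{ij}$ is the resulting potential difference $v_i - v_j = \mathbf{c}^{\top}\mathbf{v}$.

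The next step is to solve this system using the {1}-inverse $L^{\dagger}$. Since $\mathcal{G}$ is connected, $\ker(L)$ is spanned by the all-ones vector $\mathbf{1}$ and $\mathrm{range}(L) = \{\mathbf{1}\}^{\perp}$. Because $\mathbf{1}^{\top}\mathbf{c} = 0$, we have $\mathbf{c} \in \mathrm{range}(L)$, so $\mathbf{c} = L\mathbf{w}$ for some $\mathbf{w}$. Setting $\mathbf{v} = L^{\dagger}\mathbf{c}$ and invoking the defining property $L L^{\dagger} L = L$ gives $L\mathbf{v} = L L^{\dagger} L \mathbf{w} = L\mathbf{w} = \mathbf{c}$, confirming that $\mathbf{v}$ is a valid potential vector. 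Substituting into $\Omega_{ij} = \mathbf{c}^{\top}\mathbf{v} = \mathbf{c}^{\top} L^{\dagger} \mathbf{c}$ and expanding $\mathbf{c} = \mathbf{e}_i - \mathbf{e}_j$ produces $L^{\dagger}_{ii} + L^{\dagger}_{jj} - L^{\dagger}_{ij} - L^{\dagger}_{ji}$, which is exactly the claimed formula.

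The main obstacle I anticipate is the well-definedness of the expression: because $L^{\dagger}$ is merely a {1}-inverse rather than the unique Moore--Penrose pseudoinverse, distinct choices of $L^{\dagger}$ could a priori yield different values, whereas the physical resistance distance is uniquely determined. I would resolve this by again using the representation $\mathbf{c} = L\mathbf{w}$ together with the symmetry of $L$: then $\mathbf{c}^{\top} L^{\dagger} \mathbf{c} = \mathbf{w}^{\top} L^{\top} L^{\dagger} L \mathbf{w} = \mathbf{w}^{\top} L L^{\dagger} L \mathbf{w} = \mathbf{w}^{\top} L \mathbf{w}$, an expression that no longer involves $L^{\dagger}$ at all. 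This shows the quadratic form $\mathbf{c}^{\top} L^{\dagger} \mathbf{c}$---and hence $\Omega_{ij}$---takes the same value for every {1}-inverse, simultaneously establishing that the formula is meaningful and that it agrees with the physically defined effective resistance.
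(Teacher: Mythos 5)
Your proof is correct. Note, however, that the paper itself never proves this lemma: it is imported from the literature (the citation to Bapat) as a known fact, so there is no in-paper argument to compare against. Your derivation is the standard self-contained one: model the unit-current experiment as $L\mathbf{v} = \mathbf{c}$ with $\mathbf{c} = \mathbf{e}_i - \mathbf{e}_j$, use connectivity to get $\ker(L) = \mathrm{span}(\mathbf{1})$ and hence $\mathbf{c} \in \mathrm{range}(L)$, verify via $LL^{\dagger}L = L$ that $L^{\dagger}\mathbf{c}$ is a genuine solution, and read off $\Omega_{ij}$ as the quadratic form $\mathbf{c}^{\top}L^{\dagger}\mathbf{c}$. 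The closing invariance computation $\mathbf{c}^{\top}L^{\dagger}\mathbf{c} = \mathbf{w}^{\top}L\mathbf{w}$ is what justifies the word ``any'' in the lemma, and it is the right thing to include; the only remark worth adding is that it is already implicit in your earlier steps, since every \{1\}-inverse produces a solution of $L\mathbf{v} = \mathbf{c}$, any two solutions differ by a multiple of $\mathbf{1}$, and $\mathbf{c}^{\top}\mathbf{1} = 0$, so the potential difference is the same no matter which solution (hence which \{1\}-inverse) is used. Either way, the argument is complete and matches the standard proof in the cited source.
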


The properties of  resistance distance have been extensively studied, and various sum rules have been established~\cite{Kl02}.
\begin{lemma}\label{Foster}(Foster's First Theorem~\cite{FoRo1949}).
	For a graph $\mathcal{G}=(\mathcal{V},\mathcal{E})$ with $N$ nodes and $M=|\mathcal{E}|$ edges, the sum of resistance distances over all $M$ pairs of adjacent nodes is $N-1$, that
	is,
	\begin{equation}
	\sum_{\substack{i<j\\(i,j)\in\mathcal{E}}}\Omega_{ij}=N-1.
	\end{equation}
\end{lemma}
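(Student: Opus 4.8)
The plan is to convert the edge-sum of resistance distances into a single matrix trace and then evaluate that trace. The natural starting point is Lemma~\ref{efpro1}, which expresses $\Omega_{ij}=L^{\dagger}_{ii}+L^{\dagger}_{jj}-L^{\dagger}_{ij}-L^{\dagger}_{ji}$ for any fixed $\{1\}$-inverse $L^{\dagger}$ of $L$. I would first substitute this into $\sum_{i<j,\,(i,j)\in\mathcal{E}}\Omega_{ij}$ and then regroup the four resulting sums separately, handling the diagonal and off-diagonal entries of $L^{\dagger}$ by different counting arguments.

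For the diagonal contributions, each node $i$ is an endpoint of exactly $d_i$ edges, so $\sum_{i<j,\,(i,j)\in\mathcal{E}}(L^{\dagger}_{ii}+L^{\dagger}_{jj})=\sum_{i}d_i L^{\dagger}_{ii}=\tr{D L^{\dagger}}$, using that $D$ is diagonal. For the off-diagonal contributions, summing $L^{\dagger}_{ij}+L^{\dagger}_{ji}$ over unordered edges is the same as summing $L^{\dagger}_{ij}$ over all ordered pairs with $a_{ij}=1$, which gives $\sum_{i,j}a_{ij}L^{\dagger}_{ij}=\tr{A L^{\dagger}}$; note that this step does not require $L^{\dagger}$ to be symmetric. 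Combining the two pieces and invoking $L=D-A$ then collapses everything to $\sum_{i<j,\,(i,j)\in\mathcal{E}}\Omega_{ij}=\tr{D L^{\dagger}}-\tr{A L^{\dagger}}=\tr{L L^{\dagger}}$.

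It then remains to show $\tr{L L^{\dagger}}=N-1$, which I expect to be the crux. The key observation is that $P:=L L^{\dagger}$ is idempotent: by the defining relation $L L^{\dagger}L=L$ of a $\{1\}$-inverse, $P^2=(L L^{\dagger}L)L^{\dagger}=L L^{\dagger}=P$. Since the trace of an idempotent matrix equals its rank, it suffices to compute $\operatorname{rank}(P)$. From $L=L L^{\dagger}L$ one obtains $\operatorname{rank}(L)\le\operatorname{rank}(P)\le\operatorname{rank}(L)$, so $\operatorname{rank}(P)=\operatorname{rank}(L)$; and because $\mathcal{G}$ is connected, $L$ has nullity one (its kernel is spanned by the all-ones vector), whence $\operatorname{rank}(L)=N-1$. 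Therefore $\tr{L L^{\dagger}}=N-1$, which yields the claim.

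The main obstacle is precisely this final trace evaluation: one must ensure that $\tr{L L^{\dagger}}=N-1$ holds for \emph{every} $\{1\}$-inverse and not merely for the Moore--Penrose pseudoinverse, which is exactly what the idempotency-plus-rank argument secures without any symmetry assumption on $L^{\dagger}$. An alternative, purely electrical route would instead fix unit currents and invoke superposition over all edges, but the trace argument is shorter and dovetails with the $\{1\}$-inverse machinery already set up in Lemma~\ref{efpro1}.
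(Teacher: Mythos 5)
Your proof is correct, but there is nothing in the paper to compare it against: the paper does not prove this lemma at all. It is imported as a known classical result (Foster's First Theorem, with a citation to Foster's 1949 paper) and then used as a black box, e.g.\ in the proof of Lemma~\ref{lemma12}. So your argument is a self-contained substitute for the citation, and it holds up under scrutiny. The double-counting reduction $\sum_{i<j,\,(i,j)\in\mathcal{E}}\Omega_{ij}=\tr{DL^{\dagger}}-\tr{AL^{\dagger}}=\tr{LL^{\dagger}}$ is sound: each diagonal entry $L^{\dagger}_{ii}$ is picked up once per incident edge, hence $d_i$ times, and the off-diagonal step uses only the symmetry of $A$, never that of $L^{\dagger}$. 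The closing trace evaluation is also handled with the right level of care: $P=LL^{\dagger}$ is idempotent by the defining relation $LL^{\dagger}L=L$, the trace of a (not necessarily orthogonal) idempotent equals its rank, the two-sided rank inequality gives $\operatorname{rank}(P)=\operatorname{rank}(L)$, and connectivity gives $\operatorname{rank}(L)=N-1$. This matters because the paper's Lemma~\ref{efpro1} is stated for an \emph{arbitrary} $\{1\}$-inverse, which need not be symmetric or unique, and your argument never assumes more than the defining identity. Compared with the classical routes --- Foster's original argument, the electrical proof by superposition of unit current injections, or the probabilistic proof via commute times of random walks --- your linear-algebraic derivation has the advantage of dovetailing exactly with the $\{1\}$-inverse machinery the paper sets up, so the resistance-distance toolkit (Lemmas~\ref{efpro1}, \ref{Foster}, \ref{basic}) would rest on one uniform foundation.
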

Theorem~\ref{Foster} was later generalized by Foster himself in~\cite{Fo61},  which is called
Foster's second theorem. In~\cite{ThYaNa18}, further extensions were provided for these two Foster's
theorems.
\begin{lemma}\label{basic}(Sum rule~\cite{Ch10}).
	For any two different  nodes $i$ and $j$  in a graph $\mathcal{G}=(\mathcal{V},\mathcal{E})$,
	\begin{equation}
	d_{i}\Omega_{ij}+\sum_{k\in{\mathcal{N}(i)}}(\Omega_{ik}-\Omega_{jk})=2.
	\end{equation}
\end{lemma}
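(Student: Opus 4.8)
The plan is to reduce everything to entries of a $\{1\}$-inverse of $L$ via Lemma~\ref{efpro1} and then exploit the kernel structure of the Laplacian. Since the resistance distance $\Omega_{ij}$ produced by Lemma~\ref{efpro1} does not depend on which $\{1\}$-inverse is used, I would fix a \emph{symmetric} one, say the Moore--Penrose pseudoinverse $L^{\dagger}$, so that $\Omega_{ij}=L^{\dagger}_{ii}+L^{\dagger}_{jj}-2L^{\dagger}_{ij}$. The first step is purely cosmetic: because $|\mathcal{N}(i)|=d_i$, write $d_i\Omega_{ij}=\sum_{k\in\mathcal{N}(i)}\Omega_{ij}$, so that the left-hand side becomes the single sum $\sum_{k\in\mathcal{N}(i)}\bigl(\Omega_{ij}+\Omega_{ik}-\Omega_{jk}\bigr)$.

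Next I would substitute the pseudoinverse expression for each of the three resistance distances in the summand. A direct cancellation gives $\Omega_{ij}+\Omega_{ik}-\Omega_{jk}=2\bigl(L^{\dagger}_{ii}-L^{\dagger}_{ij}-L^{\dagger}_{ik}+L^{\dagger}_{jk}\bigr)$, with the $L^{\dagger}_{jj}$ and $L^{\dagger}_{kk}$ terms dropping out. Summing over $k\in\mathcal{N}(i)$ and recognizing the neighbor sums as matrix entries — $\sum_{k\in\mathcal{N}(i)}L^{\dagger}_{ik}=(AL^{\dagger})_{ii}$ and $\sum_{k\in\mathcal{N}(i)}L^{\dagger}_{jk}=(AL^{\dagger})_{ij}$ by symmetry of $L^{\dagger}$, together with $d_iL^{\dagger}_{ii}=(DL^{\dagger})_{ii}$ and $d_iL^{\dagger}_{ij}=(DL^{\dagger})_{ij}$ — collapses the whole expression, using $L=D-A$, to $2\bigl[(LL^{\dagger})_{ii}-(LL^{\dagger})_{ij}\bigr]$.

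It then remains to show $(LL^{\dagger})_{ii}-(LL^{\dagger})_{ij}=1$, which I expect to be the conceptual heart of the argument. Write this difference as $\mathbf{e}_i^{\top}(LL^{\dagger})(\mathbf{e}_i-\mathbf{e}_j)$. The vector $\mathbf{e}_i-\mathbf{e}_j$ has entries summing to zero, hence lies in $\mathbf{1}^{\perp}$, which for a connected graph is exactly $\mathrm{range}(L)$; so $\mathbf{e}_i-\mathbf{e}_j=Lw$ for some $w$. The defining identity $LL^{\dagger}L=L$ then yields $(LL^{\dagger})(\mathbf{e}_i-\mathbf{e}_j)=LL^{\dagger}Lw=Lw=\mathbf{e}_i-\mathbf{e}_j$, whose $i$-th coordinate is $1$. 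This gives $(LL^{\dagger})_{ii}-(LL^{\dagger})_{ij}=1$ and hence the claimed value $2$.

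The only real obstacle is this last step: one must not attempt to compute $LL^{\dagger}$ entrywise but instead use that $LL^{\dagger}$ acts as the identity on $\mathrm{range}(L)=\mathbf{1}^{\perp}$, which contains every difference $\mathbf{e}_i-\mathbf{e}_j$; everything else is index bookkeeping. As a cross-check I would also record the equivalent electrical derivation: injecting a unit current at $i$ and extracting it at $j$ produces potentials $\mathbf{v}=L^{\dagger}(\mathbf{e}_i-\mathbf{e}_j)$ satisfying $v_i-v_k=\tfrac{1}{2}(\Omega_{ij}+\Omega_{ik}-\Omega_{jk})$, and Kirchhoff's current law at node $i$, namely $\sum_{k\in\mathcal{N}(i)}(v_i-v_k)=1$, reproduces the sum rule at once.
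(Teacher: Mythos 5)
Your proof is correct, but there is nothing in the paper to compare it against: the paper does not prove this lemma at all, it imports it as a known sum rule from the cited reference~\cite{Ch10}. So your argument is a self-contained addition rather than a variant of the paper's proof. Checking it in detail: the cancellation $\Omega_{ij}+\Omega_{ik}-\Omega_{jk}=2\bigl(L^{\dagger}_{ii}-L^{\dagger}_{ij}-L^{\dagger}_{ik}+L^{\dagger}_{jk}\bigr)$ is right (the $L^{\dagger}_{jj}$ and $L^{\dagger}_{kk}$ terms drop out); identifying the neighbor sums with entries of $AL^{\dagger}$ legitimately uses the symmetry of the Moore--Penrose inverse, and fixing that particular $\{1\}$-inverse is justified because Lemma~\ref{efpro1} makes $\Omega_{ij}$ independent of the choice; and the key step --- that $LL^{\dagger}$ fixes every vector of $\mathrm{range}(L)$, which for a connected graph equals $\mathbf{1}^{\perp}$ and hence contains $\mathbf{e}_i-\mathbf{e}_j$ --- uses only the defining identity $LL^{\dagger}L=L$, so the heart of the argument would survive for any $\{1\}$-inverse once symmetry has done its bookkeeping work. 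Two remarks: connectivity is essential (otherwise $\mathbf{e}_i-\mathbf{e}_j$ need not lie in $\mathrm{range}(L)$), and it is indeed assumed in the paper's preliminaries, so your proof matches the paper's standing hypotheses; and your closing electrical cross-check is the same computation in disguise, since Kirchhoff's current law at $i$ is precisely the statement $(Lv)_i=1$ for $v=L^{\dagger}(\mathbf{e}_i-\mathbf{e}_j)$ --- but it is the most transparent way to see why the constant on the right-hand side is $2$.
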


The resistance distance is an important quantity~\cite{GhBoSa08}, based on which various graph invariants  have been defined and studied. Among these invariants, the Kirchhoff index~\cite{KlRa93} is of vital importance.
For a graph $\mathcal{G}$, its  Kirchhoff index  is defined as
\begin{equation*}
\mathcal{K}(\mathcal{G})=\sum_{i,j=1}^{N}\Omega_{ij}=\sum_{\substack{i\in\mathcal{V}\\j\in\mathcal{V}}}\Omega_{ij}.
\end{equation*}

Kirchhoff index has found many
applications. For example, it can be used as
measures of the overall connectedness of a network~\cite{TiLe10}, the edge
centrality of complex networks~\cite{LiZh18}, as well as the robustness of the
first-order consensus algorithm in noisy networks~\cite{PaBa14, QiZhYiLi19,YiZhPa20}.
The first-order and second-order consensus problems have received considerable
attention from the scientific community~\cite{ShCaHu18, ZhXuYiZh22,YiYaZhZhPa22}.

In recent years,  several modifications of the Kirchhoff index have been proposed, including multiplicative degree-Kirchhoff index~\cite{ChZh07} and additive degree-Kirchhoff index~\cite{GuFeYu12}. For a graph $\mathcal{G}=(\mathcal{V},\mathcal{E})$, its multiplicative degree-Kirchhoff index $R^\ast(\mathcal{G})$ and additive degree-Kirchhoff index $R^+(\mathcal{G})$ are defined as
\begin{equation*}
R^\ast(\mathcal{G})=\sum_{\substack{i\in\mathcal{V}\\j\in\mathcal{V}}}(d_id_j)\Omega_{ij}
\end{equation*}
and
\begin{equation*}
R^+(\mathcal{G})=\sum_{\substack{i\in\mathcal{V}\\j\in\mathcal{V}}}(d_i+d_j)\Omega_{ij},
\end{equation*}
respectively.

It has been shown that the multiplicative degree-Kirchhoff index $R^\ast(\mathcal{G})$ of a graph $\mathcal{G}$ is equal to $4M$ times the Kemeny constant of  the graph~\cite{ChZh07}. The Kemeny constant has been applied to different areas~\cite{Hu14,XuShZhKaZh20}. For example, it can be used as a metric  of  the efficiency of user navigation through the World Wide Web~\cite{LeLo02}.  Also, it was used to measure the efficiency of robotic surveillance in network environments~\cite{PaAgBu15} and to characterize the noise robustness of a class of protocols for formation control~\cite{JaOl19}.

\section{Network Construction and Properties}

The network family   studied here is constructed in an iterative way, controlled by two parameters: $q$ and $t$ with $q\geqslant1$ and $t \geqslant 0$.   Let $\mathcal{G}_q(t)$  be the network after $t$ iterations. Let $\mathcal{K}_q$ ($q\geqslant1$) denote the complete graph with $q$ nodes. When $q=1$, for simplicity, suppose that $\mathcal{K}_1$ is a graph with an isolate node. Then, $\mathcal{G}_q(t)$ is constructed as follows.

\begin{definition}
	For $t=0$ , $\mathcal{G}_q(0)$ is the complete graph $\mathcal{K}_{q+2}$. For $t\geqslant 0$, $\mathcal{G}_q(t+1)$ is obtained from $\mathcal{G}_q(t)$  by performing the following operation (see Fig.~\ref{build}): for every
	existing edge of $\mathcal{G}_q(t)$, introduce a copy of the
	complete graph $\mathcal{K}_q$ and connect all its nodes to both
	end nodes of the edge. 
\end{definition}
Figures~\ref{netA} and~\ref{netB} illustrate the networks  for  two particular cases of $q =1$ and $q =2$, respectively. 

\begin{figure}
	\centering
	\includegraphics[width=1.0\linewidth]{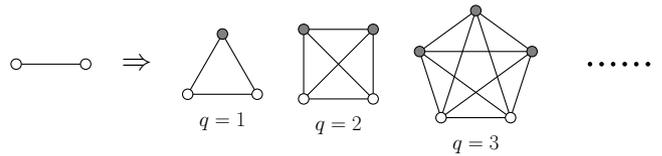}
	\caption{Network construction method. The next-iteration network is obtained from the current network by performing the operation on the right-hand side of the arrow  for each existing edge.}
	\label{build}
\end{figure}

\begin{figure}
	\centering
	\includegraphics[width=0.30\textwidth]{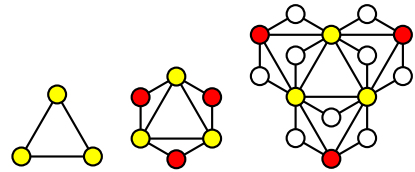}
	\caption{The networks of the first three iterations for $q =1$.} 
	\label{netA}
\end{figure}

\begin{figure}
	\centering
	\includegraphics[width=0.35\textwidth]{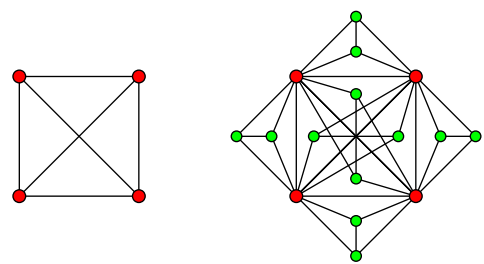}
	\caption{The networks of the first two iterations for $q =2$.} 
	\label{netB}
\end{figure}


For network  $\mathcal{G}_q(t)$, let  $\mathcal{V}_{g}$ denote its  node set, and let $\mathcal{E}_{g}$ denote its edge set.   Let $N_t= |\mathcal{V}_{g}|$ and $M_t=|\mathcal{E}_{g}|$ denote, respectively, the number of nodes and the number of edges in graph $\mathcal{G}_q(t)$.  It is easy to verify that? for all $t\geqslant0$,
\begin{gather}
M_t=\left[\frac{(q+1)(q+2)}{2}\right]^{t+1}\label{M},\\
N_t\!=\!\frac{2}{q+3}\left[\frac{(q+1)(q+2)}{2}\right]^{t+1}\!+\!\frac{2(q+2)}{q+3}.\label{N}
\end{gather}
Then, the average degree of  graph $G_q(t)$ is $2M_q(t)/N_q(t)$, which converges to $q + 3$ when $t$ is sufficiently  large. Therefore, the graph family $G_q(t)$ is sparse.

Let $\mathcal{W}_{t+1}=\mathcal{V}_{t+1}\backslash\mathcal{V}g$ represent the set of new nodes generated at iteration $g+1$, and let $W_{t+1}=|\mathcal{W}_{t+1}|$ stand for the number of these newly generated nodes. Then,
\begin{equation}\label{W}
W_{t+1}=q\left[\frac{(q+1)(q+2)}{2}\right]^{t+1}.
\end{equation}
Let $d^{(t)}_v$ be the degree of a node $v$ in graph $G_q(t)$,  which was generated at iteration $t_v$. Then,
\begin{gather}\label{d_v}
d^{(t)}_v=(q+1)^{t-t_v+1}.
\end{gather}
By construction, in graph $G_q(t)$ the degree of all simultaneously emerging nodes is the same. Thus, the number of nodes with  degree $(q+1)^{t-t_v+1}$ is $q+2$ and $q\left[\frac{(q+1)(q+2)}{2}\right]^{t_v}$ for $t_v=0$ and $t_v>0$, respectively.

By construction, the resulting networks consist of cliques $K_{q+2}$ or smaller cliques and are thus called simplicial networks, characterized by parameter $q$.  They display some remarkable properties as observed in most real networks~\cite{Ne03}. They are scale-free, since their node degrees obey a power-law distribution $P(d)\sim d^{-\gamma_q}$ with $\gamma_q=2 +\frac{\ln (q+2)}{\ln (q+1)}-\frac{\ln 2}{\ln (q+1)}$~\cite{WaYiXuZh19}. They are  small-world, since their diameters grow logarithmically with the number of nodes   and their mean clustering coefficients approach a large constant $\frac{q^2+3q+3 }{q^2+3q+5}$~\cite{WaYiXuZh19}. Moreover, they have a finite spectral dimension $ \frac{2[\ln(q^2+3q+3)-\ln 2]}{\ln (q+1)}$. Thus, many features of  simplicial networks are related to higher-order interactions encoded in  parameter $q$.   It is expected that other properties are also dependent on group  interactions, including the resistance distance to be studied in the following. 

\section{ Relations among Various Matrices}

Before determining  resistance distances, Kirchhoff index and its invariants in simplicial networks,  we first provide some relations among matrices related to  simplicial networks. These relations are  very useful for deriving the properties of  resistance distances, as well as those relevant quantities derived from  resistance distances.

Let $A_t$  denote the adjacency matrix of graph $\mathcal{G}_q(t)$. Its element $A_t(i,j)$ at row $i$ and column $j$ of $A_t$ is:  $A_t(i,j)=1$ if nodes $i$ and $j$ are connected by an edge in  $\mathcal{G}_q(t)$, $A_t (i,j)=0$ otherwise. Let $D_t$  denote the diagonal degree matrix of matrix $\mathcal{G}_q(t)$, with the $i$th diagonal element being the degree $d_i^{(t)}$ of node $i$. Let $L_t$ denote the Laplacian matrix of $\mathcal{G}_q(t)$. Then, $L_t=D_t-A_t$. Next, the recursion relations are derived for the these matrices $A_t$, $D_t$, and $L_t$.

For graph $\mathcal{G}_{q}(t+1)$, let $\alpha$ be the set of old nodes that are already present in $\mathcal{G}_q(t)$, and $\beta$ the set of new nodes generated at iteration $t+1$,  namely, those nodes  in $\mathcal{W}_{t+1}$. Then, write $A_{t+1}$  in block form as
\begin{equation}
	A_{t+1}=
	\left
	[\begin{array}{cc}
	A_{t+1}^{\alpha,\alpha} & A_{t+1}^{\alpha,\beta}\\
	A_{t+1}^{\beta,\alpha} & A_{t+1}^{\beta,\beta}
	\end{array}
	\right],
\end{equation}
where $A^{\alpha,\alpha}_{t+1}=A_t$, $A^{\alpha,\beta}_{t+1}=(A^{\beta,\alpha}_{t+1})^\top$, and $A^{\beta,\beta}_{t+1}$ is a diagonal block matrix of order/dimension $W_{t+1}/ q$, taking the form  $A^{\beta,\beta}_{t+1}={\rm {diag}}(B_q,B_q,\ldots,B_q)$
with $B_q$ being  the adjacency matrix of the complete graph $\mathcal{K}_q$ for $q \geq 1$.

In what follows, let $I$ denote the identity matrix of approximate dimension. Then, the diagonal matrix $D_{t+1}$ is given by
\begin{flalign}
    &D_{t+1}\!=\!
	\begin{bmatrix}
		D^{\alpha,\alpha}_{t+1}\! & \!0\\
		0\! &\! (q\!+\!1)I
	\end{bmatrix}
	\!=\!
	\begin{bmatrix}
	(q\!+\!1)D_{t}\! &\! 0\\
	0 \!& \!(q\!+\!1)I
	\end{bmatrix},
\end{flalign}
which is obtained based on the following facts:  when the network evolves from iteration $t$ to iteration $t+1$, the degree of each node in set $\alpha$ increases by a factor of $q+1$ as shown in~\eqref{d_v}, while the degree of all nodes in set $\beta$ is equal to $q+1$. Therefore,  the Laplacian matrix $L_{t+1}$ satisfies
\begin{small}
	\begin{flalign}\label{lg+1}
	L_{t+1}=&D_{t+1}-A_{t+1}\notag\\
	=&
	\begin{bmatrix}
	(q\!+\!1)D_{t}\!-\!A_t\! &\! -A^{\alpha,\beta}_{t+1}\\
	-A^{\beta,\alpha}_{t+1} \!& \!(q\!+\!1)I\!-\!A^{\beta,\beta}_{t+1}
	\end{bmatrix}.
	\end{flalign}
\end{small}

In addition to the above-derived recursion relations for adjacency matrix, degree diagonal matrix, and Laplacian matrix, relations for other relevant matrices can also be established.

\begin{lemma}\label{PFProA}
For  graph $\mathcal{G}_q(t+1)$, $t\geqslant 0$,
	 \begin{equation}
	 	A^{\alpha,\beta}_{t+1}A^{\beta,\alpha}_{t+1}=q(D_t+A_t).
	 \end{equation}
\end{lemma}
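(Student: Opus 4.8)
The plan is to interpret the matrix product $A^{\alpha,\beta}_{t+1}A^{\beta,\alpha}_{t+1}$ combinatorially: since $A^{\alpha,\beta}_{t+1}$ is the (old)$\times$(new) block of the adjacency matrix, its $(i,k)$ entry is $1$ exactly when old node $i$ is adjacent to new node $k$. Hence the $(i,j)$ entry of the product counts the number of new nodes $k \in \beta$ that are simultaneously adjacent to both old nodes $i$ and $j$. The entire claim reduces to evaluating this count for each pair $(i,j)$ and matching it against $q(D_t + A_t)$, whose $(i,j)$ entry equals $q\,d_i^{(t)}$ when $i=j$, equals $q$ when $i \neq j$ and $(i,j)\in\mathcal{E}_t$, and equals $0$ otherwise.

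First I would recall the construction rule precisely: at iteration $t+1$, each existing edge $(i,j)$ of $\mathcal{G}_q(t)$ spawns one copy of $\mathcal{K}_q$, i.e.\ $q$ brand-new nodes, and every one of those $q$ nodes is joined to both endpoints $i$ and $j$. This is the single mechanism by which new nodes attach to old ones, so every new node is associated with exactly one old edge and is adjacent to exactly the two endpoints of that edge (among the old nodes). I would then carry out the count in two cases. For the off-diagonal case $i \neq j$: a new node $k$ is adjacent to both $i$ and $j$ only if $k$ belongs to the gadget spawned by an edge having both $i$ and $j$ as endpoints; since $\mathcal{G}_q(t)$ is simple there is at most one such edge, so the count is $q$ if $(i,j)\in\mathcal{E}_t$ and $0$ otherwise—matching $q\,A_t$. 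For the diagonal case $i = j$: I count all new nodes adjacent to $i$, which are contributed by every old edge incident to $i$; there are $d_i^{(t)}$ such edges, each contributing $q$ new neighbors, giving $q\,d_i^{(t)}$—matching the diagonal of $q\,D_t$.

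Adding the two cases yields precisely $q(D_t + A_t)$, completing the argument. The only place demanding care—the \emph{main obstacle}, though a mild one—is the bookkeeping that guarantees no double counting in the diagonal case: I must confirm that the $q$ new nodes from distinct old edges incident to $i$ are genuinely distinct (each gadget introduces fresh nodes, so this holds), and that a single new node is never adjacent to more than two old nodes (guaranteed because each $\mathcal{K}_q$ copy attaches only to the two endpoints of its generating edge). Once these disjointness facts are pinned down, the identity follows immediately from the edge-operation definition, with no spectral or inductive machinery required.
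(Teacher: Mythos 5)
Your proposal is correct and follows essentially the same route as the paper's own proof: both interpret the $(i,j)$ entry of $A^{\alpha,\beta}_{t+1}A^{\beta,\alpha}_{t+1}$ as the number of new nodes adjacent to both old nodes $i$ and $j$, then split into the diagonal case (yielding $q\,d_i^{(t)}$) and the off-diagonal case (yielding $q\,A_t(i,j)$ via the fact that each old edge spawns exactly $q$ new common neighbors of its two endpoints). The only cosmetic difference is that the paper obtains the diagonal entry from the degree increment $d_i^{(t+1)}-d_i^{(t)}=q\,d_i^{(t)}$, while you count the gadgets incident to $i$ directly; these are the same computation.
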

The proof of Lemma~\ref{PFProA} is provided in~\ref{AppA}.

\begin{lemma}\label{beauty}
For  graph $\mathcal{G}_q(t+1)$, $t\geqslant 0$,
	\begin{equation}
		A^{\alpha,\beta}_{t+1}\left((q+1)I-A^{\beta,\beta}_{t+1}\right)^{-1}=\frac{1}{2}A^{\alpha,\beta}_{t+1}.
	\end{equation}
\end{lemma}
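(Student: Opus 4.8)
The plan is to exploit the very special structure of the cross block $A^{\alpha,\beta}_{t+1}$, which records how the new nodes attach to the old ones. By construction, the $q$ new nodes introduced for a fixed edge $(u,v)$ of $\mathcal{G}_q(t)$ constitute one diagonal block of $A^{\beta,\beta}_{t+1}$ (a copy of $B_q$), and every one of these $q$ new nodes is joined to exactly the two endpoints $u$ and $v$ (and to the other $q-1$ new nodes of its clique, which is what $B_q$ encodes). Consequently, if one restricts $A^{\alpha,\beta}_{t+1}$ to the $q$ columns of such a block, the corresponding sub-block is $(\mathbf{e}_u+\mathbf{e}_v)\mathbf{1}^{\top}$, a rank-one matrix whose every column equals $\mathbf{e}_u+\mathbf{e}_v$; equivalently, each row of $A^{\alpha,\beta}_{t+1}$ restricted to one block is either the all-ones row $\mathbf{1}^{\top}$ (for an old node incident to that edge) or the zero row. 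Extracting this structural dichotomy cleanly from the construction is really the crux of the argument; everything afterward is block-diagonal bookkeeping over the $W_{t+1}/q$ edge-blocks.

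First I would record that, since $B_q$ is the adjacency matrix of $\mathcal{K}_q$, we have $B_q = J_q - I$, where $J_q$ is the $q\times q$ all-ones matrix, so that each diagonal block of $(q+1)I - A^{\beta,\beta}_{t+1}$ equals $(q+2)I - J_q$. Rather than inverting this block directly, I prefer to sidestep the inverse and instead prove the equivalent identity
\[
A^{\alpha,\beta}_{t+1}\bigl((q+1)I - A^{\beta,\beta}_{t+1}\bigr) = 2\,A^{\alpha,\beta}_{t+1},
\]
from which the claim follows at once by right-multiplying by $\bigl((q+1)I - A^{\beta,\beta}_{t+1}\bigr)^{-1}$ (each block $(q+2)I - J_q$ is nonsingular, as its eigenvalues are $2$ and $q+2$). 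Expanding the left-hand side reduces the whole problem to computing the product $A^{\alpha,\beta}_{t+1}A^{\beta,\beta}_{t+1}$.

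The computation of $A^{\alpha,\beta}_{t+1}A^{\beta,\beta}_{t+1}$ is where the structural observation pays off: acting block by block, a nonzero row $\mathbf{1}^{\top}$ meets $B_q = J_q - I$ to give $\mathbf{1}^{\top}(J_q - I) = (q-1)\mathbf{1}^{\top}$, while zero rows remain zero, so that $A^{\alpha,\beta}_{t+1}A^{\beta,\beta}_{t+1} = (q-1)\,A^{\alpha,\beta}_{t+1}$. Substituting this back yields $(q+1)A^{\alpha,\beta}_{t+1} - (q-1)A^{\alpha,\beta}_{t+1} = 2A^{\alpha,\beta}_{t+1}$, which is exactly the displayed identity and hence the lemma. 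As an alternative ending that some readers may find more transparent, one can instead verify directly that $\bigl((q+2)I - J_q\bigr)^{-1} = \tfrac{1}{q+2}I + \tfrac{1}{2(q+2)}J_q$ and then check, using $\mathbf{1}^{\top}J_q = q\,\mathbf{1}^{\top}$, that $\mathbf{1}^{\top}$ times this inverse equals $\tfrac{1}{2}\mathbf{1}^{\top}$; multiplying block by block again produces the factor $\tfrac12$ on every nonzero row and $0$ on the rest. The only point requiring care is keeping track of which old-node rows are nonzero in which block, i.e.\ not conflating the global matrix with its per-edge sub-blocks.
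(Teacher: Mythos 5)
Your proof is correct, and its main route is genuinely different from the paper's. The paper's proof (Appendix~B) goes through the inverse explicitly: it writes $(q+1)I-A^{\beta,\beta}_{t+1}$ as a block-diagonal matrix $\mathrm{diag}(C_q,\dots,C_q)$, exhibits each block's inverse $F_q$ with diagonal entries $\frac{3}{2(q+2)}$ and off-diagonal entries $\frac{1}{2(q+2)}$ (exactly your $\frac{1}{q+2}I+\frac{1}{2(q+2)}J_q$), and then computes the product $A^{\alpha,\beta}_{t+1}\,\mathrm{diag}(F_q,\dots,F_q)$ entrywise, splitting into the cases $i\sim j$ and $i\nsim j$; this is precisely your ``alternative ending.'' Your primary argument instead sidesteps the inverse altogether: from $B_q=J_q-I$ and the observation that each per-edge sub-block of $A^{\alpha,\beta}_{t+1}$ is the rank-one matrix $(\mathbf{e}_u+\mathbf{e}_v)\mathbf{1}^{\top}$, you get $A^{\alpha,\beta}_{t+1}A^{\beta,\beta}_{t+1}=(q-1)A^{\alpha,\beta}_{t+1}$ via the row-sum identity $\mathbf{1}^{\top}(J_q-I)=(q-1)\mathbf{1}^{\top}$, hence $A^{\alpha,\beta}_{t+1}\bigl((q+1)I-A^{\beta,\beta}_{t+1}\bigr)=2A^{\alpha,\beta}_{t+1}$, and the lemma follows by right-multiplying with the inverse, whose existence you correctly justify from the spectrum of $(q+2)I-J_q$ (eigenvalues $2$ and $q+2$; for $q=1$ only $2$ occurs and the statement is trivial there). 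The trade-off: your route needs only invertibility, not an explicit inverse formula, so there is nothing to guess and the key step is a one-line row computation; the paper's route does more work but produces the explicit blocks of $C^{-1}$ that enter the block $\{1\}$-inverse formula of Lemma~\ref{LemmaBlock1inv}, which would be required if one also wanted the $\alpha,\beta$ and $\beta,\beta$ blocks of $L^{\dagger}_{t+1}$ --- though, as it happens, the paper only ever uses the $\alpha,\alpha$ block, so nothing is lost by your shortcut.
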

The proof of Lemma~\ref{beauty} is provided in~\ref{AppB}.

\section{ Relations among Effective Resistances}

In this section, we study the relations governing  resistance distances.  For graph $\mathcal{G}_q(t+1)$, we first establish the evolution rule of  resistance distance between any pair of old nodes in $\mathcal{G}_q(t)$.  Then, we demonstrate that the resistance distance between two arbitrary  nodes in $\mathcal{G}_q(t+1)$ can be exactly determined or expressed  in terms of resistance distances of those old node pairs in $\mathcal{G}_q(t)$.

Write $\Omega^{(t)}_{ij}$ to represent   the resistance distance  between  nodes $i$ and $j$ in graph $\mathcal{G}_q(t)$, and  write $L^{\dagger}_t$ to denote a \{1\}\textendash inverse of Laplacian matrix $L_t$ for $\mathcal{G}_q(t)$.
\begin{lemma}\label{lemma7}
Let $i,j\in \mathcal{V}_t$ be a pair of old nodes in $\mathcal{G}_{t+1} (t\geqslant0)$. Then, $\Omega^{(t)}_{ij}$ satisfies the following recursive relation:
	\begin{equation}
		\Omega^{(t+1)}_{ij}=\frac{2}{q+2}\Omega^{(t)}_{ij}.
	\end{equation}
\end{lemma}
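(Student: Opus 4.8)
The plan is to build an explicit $\{1\}$-inverse of $L_{t+1}$ from a $\{1\}$-inverse of $L_t$ by applying the block-inversion formula of Lemma~\ref{LemmaBlock1inv} to the block decomposition~\eqref{lg+1}, and then to read off the four entries indexed by the old nodes $i,j$ via Lemma~\ref{efpro1}. Concretely, I identify the blocks $A=(q+1)D_t-A_t$, $B=-A^{\alpha,\beta}_{t+1}$, and $C=(q+1)I-A^{\beta,\beta}_{t+1}$. First I would check that $C$ is nonsingular: since $A^{\beta,\beta}_{t+1}$ is block-diagonal with blocks $B_q$, the adjacency matrix of $\mathcal{K}_q$, whose eigenvalues are $q-1$ and $-1$, the matrix $C$ has eigenvalues $2$ and $q+2$, both positive. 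Hence Lemma~\ref{LemmaBlock1inv} applies and the upper-left block of the resulting $\{1\}$-inverse $L_{t+1}^{\dagger}$ is $S^{\dagger}$, where $S=A-BC^{-1}B^{\top}$ is the Schur complement.

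The crux is to evaluate this Schur complement, and this is where the two preparatory lemmas of Section~5 do the essential work. Because $B=-A^{\alpha,\beta}_{t+1}$ and $B^{\top}=-A^{\beta,\alpha}_{t+1}$, I have $BC^{-1}B^{\top}=A^{\alpha,\beta}_{t+1}C^{-1}A^{\beta,\alpha}_{t+1}$. Applying Lemma~\ref{beauty} collapses $A^{\alpha,\beta}_{t+1}C^{-1}$ to $\tfrac{1}{2}A^{\alpha,\beta}_{t+1}$, and then Lemma~\ref{PFProA} evaluates $A^{\alpha,\beta}_{t+1}A^{\beta,\alpha}_{t+1}=q(D_t+A_t)$, giving $BC^{-1}B^{\top}=\tfrac{q}{2}(D_t+A_t)$. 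Substituting back, the coefficients of $D_t$ and of $A_t$ both simplify to $\tfrac{q+2}{2}$, yielding the key identity $S=\tfrac{q+2}{2}(D_t-A_t)=\tfrac{q+2}{2}L_t$. I expect this cancellation to be the decisive step: it is precisely the scaling that produces a constant multiple of $L_t$, and it is where the factor $2/(q+2)$ originates.

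Given $S=\tfrac{q+2}{2}L_t$, a $\{1\}$-inverse of $S$ is $S^{\dagger}=\tfrac{2}{q+2}L_t^{\dagger}$, since scaling a matrix by a nonzero constant scales any $\{1\}$-inverse by the reciprocal. Thus the upper-left block of $L_{t+1}^{\dagger}$ equals $\tfrac{2}{q+2}L_t^{\dagger}$. Since $i$ and $j$ both lie in the old-node block $\alpha$, the four entries $L_{t+1}^{\dagger}(i,i)$, $L_{t+1}^{\dagger}(j,j)$, $L_{t+1}^{\dagger}(i,j)$, $L_{t+1}^{\dagger}(j,i)$ are all taken from this upper-left block and hence equal $\tfrac{2}{q+2}$ times the corresponding entries of $L_t^{\dagger}$. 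Feeding these into the resistance formula of Lemma~\ref{efpro1} and factoring out $\tfrac{2}{q+2}$ gives $\Omega^{(t+1)}_{ij}=\tfrac{2}{q+2}\Omega^{(t)}_{ij}$, as claimed. The only subtlety is to invoke that Lemma~\ref{efpro1} holds for \emph{any} $\{1\}$-inverse, so it is legitimate to compute $\Omega^{(t+1)}_{ij}$ from the particular inverse constructed above and $\Omega^{(t)}_{ij}$ from $L_t^{\dagger}$.
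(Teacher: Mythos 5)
Your proposal is correct and follows essentially the same route as the paper's own proof: both apply the block $\{1\}$-inverse formula of Lemma~\ref{LemmaBlock1inv} to the decomposition~\eqref{lg+1}, use Lemmas~\ref{PFProA} and~\ref{beauty} to collapse the Schur complement to $\tfrac{q+2}{2}L_t$, and then read off the resistance via Lemma~\ref{efpro1}. Your explicit check that $C=(q+1)I-A^{\beta,\beta}_{t+1}$ is nonsingular (eigenvalues $2$ and $q+2$) is a small bonus of rigor that the paper leaves implicit.
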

\begin{proof}
Any \{1\}\textendash inverse $L^{\dagger}_{t+1}$ of matrix $L_{t+1}$ can be written as
	\begin{equation}
	L^{\dagger}_{t+1}=
	\begin{bmatrix}
	L^{\dagger}_{\alpha,\alpha} & L^{\dagger}_{\alpha,\beta}\\
	L^{\dagger}_{\beta,\alpha} & L^{\dagger}_{\beta,\beta}
	\end{bmatrix}.
	\end{equation}
By~\eqref{lg+1}  and Lemmas~\ref{efpro1}, ~\ref{PFProA} and~\ref{beauty}, one obtains
	\begin{small}
		\begin{align}\label{PFoosubmat}
L^{\dagger}_{\alpha,\alpha}=&\left(\!(q\!+\!1)D_t\!-\!A_t\!-\!A^{\alpha,\beta}_{t+1}\!\left((q\!+\!1)I\!-\!A^{\beta,\beta}_{t+1}\right)^{\!-\!1}\!\!\!A^{\beta,\alpha}_{t+1}\right)^{\dagger}\notag\\
		=&\left((q+1)D_t-A_t-\frac{q}{2}\left(D_t+A_t\right)\right)^{\dagger}\notag\\
		=&\left(\left(\frac{q}{2}+1\right)(D_t-A_t)\right)^{\dagger}\notag\\
		=&\frac{2}{q+2}L^{\dagger}_{t}.
		\end{align}
	\end{small}
\end{proof}
By Lemma~$\ref{efpro1}$ and~\eqref{PFoosubmat}, for two nodes~$i,j \in \mathcal{V}_t$, one has
\begin{align}
&\Omega_{ij}(t+1) \nonumber\\
=& L^{\dag}_{\alpha,\alpha}(i,i) + L^{\dag}_{\alpha,\alpha}(j,j)
- L^{\dag}_{\alpha,\alpha}(i,j) - L^{\dag}_{\alpha,\alpha}(j,i)\nonumber\\
=& \frac{2}{q+2} \left( L^{\dag}_{t}(i,i) + L^{\dag}_{t}(j,j)
- L^{\dag}_{t}(i,j) - L^{\dag}_{t}(j,i) \right) \nonumber\\
=& \frac{2}{q+2} \Omega_{ij}^{(t)}\,,
\end{align}
 as required.  


In addition to the pairs of old nodes, the effective resistance between any other pairs of  nodes in $\mathcal{G}_q(t+1)$ can be explicitly determined or be represented in terms of those for old nodes in $\mathcal{G}_q(t)$.
 To this end, introduce some additional quantities. For any two subsets $X$ and $Y$ of set $\mathcal{V}_t$ for  nodes in graph $\mathcal{G}_q(t)$, define
\begin{equation}
	\Omega^{(t)}_{X,Y}=\sum_{i\in X,j\in Y}\Omega^{(t)}_{ij}.
\end{equation}
For a new node $i\in \mathcal{W}_{t+1}$ in $\mathcal{G}_q(t+1)$, let $\Delta_{i}=\{m,n\}$ be the set of two old neighbors of $i$. Define
\begin{equation}
	\Omega^{(t+1)}_{\Delta_i}=\Omega^{(t+1)}_{mn}.
\end{equation}

\begin{lemma}\label{lemma8}
For $t\geqslant0$, $i,j\in \mathcal{W}_{t+1}$ that are adjacent to each other, one has
	\begin{equation}\label{Omegaijw}
		\Omega^{(t+1)}_{ij}=\frac{2}{q+2}.
	\end{equation}
\end{lemma}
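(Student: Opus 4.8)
The plan is to compute the effective resistance between two new, mutually adjacent nodes $i,j \in \mathcal{W}_{t+1}$ directly from the sum rule of Lemma~\ref{basic}, exploiting the highly symmetric local structure around newly created nodes. By construction, $i$ and $j$ are two nodes of a common copy of $\mathcal{K}_q$ that was attached to some edge $(m,n)$ of $\mathcal{G}_q(t)$; hence $i$ and $j$ have the same degree $q+1$, and their neighborhoods are nearly identical. Concretely, $\mathcal{N}(i) = \{m,n\} \cup (\text{the other } q-1 \text{ clique nodes})$, and similarly for $j$, so the two neighbor sets differ only in that $i$ replaces $j$ and vice versa. I would first record these structural facts precisely, since they are what makes the sum rule collapse.

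Next I would apply Lemma~\ref{basic} with $d_i = q+1$:
\begin{equation}
(q+1)\,\Omega^{(t+1)}_{ij} + \sum_{k\in\mathcal{N}(i)}\bigl(\Omega^{(t+1)}_{ik}-\Omega^{(t+1)}_{jk}\bigr) = 2.\notag
\end{equation}
The key observation is that the symmetry between $i$ and $j$ forces the summation to telescope. For every common neighbor $k$ (that is, each $k\in\mathcal{N}(i)\cap\mathcal{N}(j)$, which comprises $m$, $n$, and the other $q-1$ clique vertices), the swap symmetry that exchanges $i$ and $j$ while fixing $k$ is an automorphism of the local neighborhood, giving $\Omega^{(t+1)}_{ik}=\Omega^{(t+1)}_{jk}$, so those terms cancel. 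The only surviving index is $k=j$ itself: the term $\Omega^{(t+1)}_{ij}-\Omega^{(t+1)}_{jj}=\Omega^{(t+1)}_{ij}-0=\Omega^{(t+1)}_{ij}$. Substituting leaves $(q+1)\Omega^{(t+1)}_{ij}+\Omega^{(t+1)}_{ij}=2$, i.e. $(q+2)\,\Omega^{(t+1)}_{ij}=2$, which yields \eqref{Omegaijw}.

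The main obstacle is justifying the pairwise cancellation $\Omega^{(t+1)}_{ik}=\Omega^{(t+1)}_{jk}$ rigorously rather than by appeal to a hand-waved ``symmetry.'' I would make this airtight by exhibiting the graph automorphism $\sigma$ of $\mathcal{G}_q(t+1)$ that transposes $i$ and $j$ and fixes every other node: since $i$ and $j$ lie in the same attached $\mathcal{K}_q$ with identical external attachments to the edge $(m,n)$, swapping them preserves the adjacency matrix, so $\sigma$ is an automorphism and resistance distance is automorphism-invariant, giving $\Omega^{(t+1)}_{ik}=\Omega^{(t+1)}_{\sigma(i)\sigma(k)}=\Omega^{(t+1)}_{jk}$ for every $k\notin\{i,j\}$. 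The one point requiring a little care is the case $q=1$, where the attached $\mathcal{K}_1$ is a single node and there are no ``other clique vertices,'' but then $\mathcal{N}(i)=\{m,n\}$ with the same cancellation structure, so the argument goes through unchanged; I would remark on this to confirm the formula holds for all $q\geqslant 1$.
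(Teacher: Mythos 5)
Your proof is correct and follows essentially the same route as the paper's: apply the sum rule of Lemma~\ref{basic} with $d_i=q+1$, cancel the common-neighbor terms via the $i\leftrightarrow j$ symmetry, and keep only the $k=j$ term to get $(q+2)\,\Omega^{(t+1)}_{ij}=2$. Your explicit transposition automorphism simply makes the paper's ``by symmetry'' step rigorous; the only small imprecision is the $q=1$ remark, where the hypothesis is vacuous (each edge spawns a single new node, so no two nodes of $\mathcal{W}_{t+1}$ are adjacent) rather than the argument ``going through unchanged.''
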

\begin{proof}
by Lemma~\ref{basic}, for $i\in \mathcal{W}_{t+1}$ and its neighboring node $j\in \mathcal{W}_{t+1}$, one obtains
	\begin{equation}\label{Omegaijx}
		(q+1)\Omega^{(t+1)}_{ij}+\sum_{k\in{\mathcal{N}(i)}}\left(\Omega^{(t+1)}_{ik}-\Omega^{(t+1)}_{jk}\right)=2.
	\end{equation}
By symmetry,   for any node $k \in \mathcal{N}(i)$ except $j$, one has
	\begin{equation}
		\Omega^{(t+1)}_{ik}=\Omega^{(t+1)}_{jk},
	\end{equation}
which leads to
	\begin{equation}\label{Omegaijy}
	\sum_{k\in{\mathcal{N}(i)}}\left(\Omega^{(t+1)}_{ik}-\Omega^{(t+1)}_{jk}\right)=\Omega^{(t+1)}_{ij}.
	\end{equation}
With~\eqref{Omegaijx} and~\eqref{Omegaijy}, one obtains~\eqref{Omegaijw}.
\end{proof}

\begin{lemma}\label{lemma9}
For a node $i\in \mathcal{W}_{t+1}$ with $t\geqslant0$, one has
	\begin{equation}
	\Omega^{(t+1)}_{i,\Delta_{i}}=\frac{3}{q+2}+\frac{1}{2}\Omega^{(t+1)}_{\Delta_{i}}.
	\end{equation}
\end{lemma}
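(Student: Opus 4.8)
The plan is to apply the sum rule of Lemma~\ref{basic} to the new node $i$ paired with each of its two old neighbors, and then add the two resulting identities. First I would describe the local neighborhood of $i$ explicitly: since $i\in\mathcal{W}_{t+1}$ is created when an edge $(m,n)$ of $\mathcal{G}_q(t)$ is processed, its neighbors are exactly the two old nodes $m,n$ together with the $q-1$ other new nodes, say $w_1,\dots,w_{q-1}$, lying in the same copy of $\mathcal{K}_q$. Hence $d_i=q+1$ and $\Delta_i=\{m,n\}$, so that $\Omega^{(t+1)}_{i,\Delta_i}=\Omega^{(t+1)}_{im}+\Omega^{(t+1)}_{in}$.

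Next I would invoke Lemma~\ref{basic} with the ordered pair $(i,m)$, expanding the sum over $\mathcal{N}(i)=\{m,n,w_1,\dots,w_{q-1}\}$. Two facts trim this sum: by Lemma~\ref{lemma8} each $\Omega^{(t+1)}_{iw_\ell}=\tfrac{2}{q+2}$, because $i$ and $w_\ell$ are adjacent nodes of $\mathcal{W}_{t+1}$; and by the automorphism of $\mathcal{G}_q(t+1)$ that permutes the new nodes $\{i,w_1,\dots,w_{q-1}\}$ while fixing every other vertex, one has $\Omega^{(t+1)}_{mw_\ell}=\Omega^{(t+1)}_{mi}$ for all $\ell$. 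After the $\Omega^{(t+1)}_{im}$ contributions collect with coefficient $(q+1)+1-(q-1)=3$, this yields the single linear relation
\[
3\,\Omega^{(t+1)}_{im}+\Omega^{(t+1)}_{in}-\Omega^{(t+1)}_{mn}+\frac{2(q-1)}{q+2}=2.
\]
Applying Lemma~\ref{basic} to the pair $(i,n)$ instead produces the companion relation with $m$ and $n$ interchanged.

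I would then add the two relations. The asymmetric first-neighbor terms combine into $4\bigl(\Omega^{(t+1)}_{im}+\Omega^{(t+1)}_{in}\bigr)=4\,\Omega^{(t+1)}_{i,\Delta_i}$, the edge term becomes $-2\,\Omega^{(t+1)}_{mn}=-2\,\Omega^{(t+1)}_{\Delta_i}$, and the right-hand side is $4$. Solving for $\Omega^{(t+1)}_{i,\Delta_i}$ gives $1+\tfrac12\Omega^{(t+1)}_{\Delta_i}-\tfrac{q-1}{q+2}$, which simplifies to the claimed expression once one notes $1-\tfrac{q-1}{q+2}=\tfrac{3}{q+2}$. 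The only delicate point is the symmetry identity $\Omega^{(t+1)}_{mw_\ell}=\Omega^{(t+1)}_{mi}$: it is geometrically evident but should be justified from the vertex-permuting automorphism, and I would also check that the argument degenerates gracefully at $q=1$, where no $w_\ell$ exist and the term $\tfrac{2(q-1)}{q+2}$ vanishes, leaving the identity intact.
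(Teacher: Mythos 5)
Your proof is correct and takes essentially the same route as the paper's: both apply the sum rule of Lemma~\ref{basic} to the pairs $(i,m)$ and $(i,n)$, trim the neighbor sums using Lemma~\ref{lemma8} for the $q-1$ adjacent new nodes and the symmetry $\Omega^{(t+1)}_{mw_\ell}=\Omega^{(t+1)}_{mi}$, and then add the two resulting linear identities to isolate $\Omega^{(t+1)}_{i,\Delta_i}$. The only differences are presentational — you expand the sums termwise and justify the symmetry via the explicit clique-permuting automorphism (and check the degenerate case $q=1$), whereas the paper collects the sums into intermediate identities first — so the two arguments coincide in substance.
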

\begin{proof}
by Lemma~\ref{basic}, for $i\in \mathcal{W}_{t+1}$ and its two old neighbors $m$ and $n$ belonging to  $\mathcal{V}_{t}$ and forming set $\Delta_{i}=\{m,n\}$, one has
	\begin{equation}\label{eq91}
	(q+1)\Omega^{(t+1)}_{im}+\sum_{k\in{\mathcal{N}(i)}}\left(\Omega^{(t+1)}_{ik}-\Omega^{(t+1)}_{mk}\right)=2
	\end{equation}
	and
    \begin{equation}\label{eq92}
    (q+1)\Omega^{(t+1)}_{in}+\sum_{k\in{\mathcal{N}(i)}}\left(\Omega^{(t+1)}_{ik}-\Omega^{(t+1)}_{nk}\right)=2.
    \end{equation}
 By symmetry, for any node $k\in\mathcal{N}(i)$ except $m$ and $n$,
$\Omega^{(t+1)}_{mk}=\Omega^{(t+1)}_{mi}$ holds, which  implies that
 \begin{equation}\label{eq93}
    	\sum_{k\in{\mathcal{N}(i)}}\Omega^{(t+1)}_{mk}=(q-1)\Omega^{(t+1)}_{im}+\Omega^{(t+1)}_{\Delta_{i}}.
    \end{equation}
On the other hand, using  Lemma~\ref{lemma8}, one obtains
    \begin{equation}\label{eq94}
    	\sum_{k\in{\mathcal{N}(i)}}\Omega^{(t+1)}_{ik}=\frac{2(q-1)}{q+2}+\Omega^{(t+1)}_{i,\Delta_{i}}.
    \end{equation}
With~\eqref{eq91},~\eqref{eq93},  and~\eqref{eq94}, one has
    \begin{equation}
    	 2\Omega^{(t+1)}_{im}+\Omega^{(t+1)}_{i,\Delta_{i}}-\Omega^{(t+1)}_{\Delta_{i}}=\frac{6}{q+2}.
    \end{equation}
Analogously,
    \begin{equation}
    2\Omega^{(t+1)}_{in}+\Omega^{(t+1)}_{i,\Delta_{i}}-\Omega^{(t+1)}_{\Delta_{i}}=\frac{6}{q+2}.
    \end{equation}
The above two equations show that  $\Omega^{(t+1)}_{im}=\Omega^{(t+1)}_{in}$, which is easily understood according to the network construction. 
 Summing these two equations gives
    \begin{equation}
    4\Omega^{(t+1)}_{i,\Delta_{i}}-2\Omega^{(t+1)}_{\Delta_{i}}=\frac{12}{q+2}.
    \end{equation}
    That is
    \begin{equation}
    \Omega^{(t+1)}_{i,\Delta_{i}}=\frac{3}{q+2}+\frac{1}{2}\Omega^{(t+1)}_{\Delta_{i}},
    \end{equation}
    as required.
\end{proof}

\begin{lemma}\label{lemma10}
For $t\geqslant0$ and two nodes $i$ and $j$, with $i\in\mathcal{W}_{t+1}$ and $j\in\mathcal{V}_t$, one has
	\begin{equation}\label{lemma10eq}
\Omega^{(t+1)}_{ij}=\frac{1}{2}\left(\frac{3}{q+2}-\frac{1}{2}\Omega^{(t+1)}_{\Delta_{i}}+\Omega^{(t+1)}_{\Delta_{i},j}\right).
	\end{equation}
\end{lemma}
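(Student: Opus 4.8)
The plan is to apply the sum rule of Lemma~\ref{basic} to the new node $i\in\mathcal{W}_{t+1}$ and the old node $j\in\mathcal{V}_t$. Since $i$ was generated at iteration $t+1$ it has degree $q+1$, so the rule reads
\begin{equation}
(q+1)\Omega^{(t+1)}_{ij}+\sum_{k\in\mathcal{N}(i)}\left(\Omega^{(t+1)}_{ik}-\Omega^{(t+1)}_{jk}\right)=2.\notag
\end{equation}
The neighbourhood $\mathcal{N}(i)$ splits naturally into the two old end nodes $m,n$ of the edge whose attached copy of $\mathcal{K}_q$ contains $i$ (so that $\Delta_i=\{m,n\}$), together with the remaining $q-1$ sibling new nodes lying in that same copy of $\mathcal{K}_q$. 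I would split the summation according to this decomposition and evaluate the two pieces separately.

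For the $q-1$ sibling terms I would exploit the symmetry already used in Lemmas~\ref{lemma8} and~\ref{lemma9}: every node of the attached clique is adjacent to exactly $m$, $n$, and the other $q-1$ clique nodes, so there is an automorphism group of $\mathcal{G}_q(t+1)$ acting transitively on these $q$ siblings while fixing all old nodes. Since resistance distance is invariant under automorphisms, $\Omega^{(t+1)}_{jk}=\Omega^{(t+1)}_{ij}$ for each sibling $k$, and Lemma~\ref{lemma8} gives $\Omega^{(t+1)}_{ik}=\tfrac{2}{q+2}$. The sibling contribution therefore collapses to $(q-1)\big(\tfrac{2}{q+2}-\Omega^{(t+1)}_{ij}\big)$. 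For the two old-node terms I would use the definitions of the aggregated quantities to write $(\Omega^{(t+1)}_{im}-\Omega^{(t+1)}_{jm})+(\Omega^{(t+1)}_{in}-\Omega^{(t+1)}_{jn})=\Omega^{(t+1)}_{i,\Delta_i}-\Omega^{(t+1)}_{\Delta_i,j}$, and then substitute the value $\Omega^{(t+1)}_{i,\Delta_i}=\tfrac{3}{q+2}+\tfrac12\Omega^{(t+1)}_{\Delta_i}$ supplied by Lemma~\ref{lemma9}.

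Inserting both evaluated pieces into the sum rule, the coefficient of $\Omega^{(t+1)}_{ij}$ becomes $(q+1)-(q-1)=2$, and the relation reduces to
\begin{equation}
2\Omega^{(t+1)}_{ij}+\frac{3}{q+2}+\frac12\Omega^{(t+1)}_{\Delta_i}-\Omega^{(t+1)}_{\Delta_i,j}+\frac{2(q-1)}{q+2}=2.\notag
\end{equation}
Solving for $\Omega^{(t+1)}_{ij}$ and simplifying the constant via $2-\tfrac{3}{q+2}-\tfrac{2(q-1)}{q+2}=\tfrac{3}{q+2}$ yields exactly the asserted formula~\eqref{lemma10eq}. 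The one genuinely delicate step is the symmetry argument of the second paragraph: I must make precise that all $q$ new nodes of a given attached clique are interchangeable by graph automorphisms fixing the old vertices, so that $\Omega^{(t+1)}_{jk}=\Omega^{(t+1)}_{ij}$ and $\Omega^{(t+1)}_{im}=\Omega^{(t+1)}_{in}$ hold rigorously rather than merely intuitively; everything else is bookkeeping that reuses Lemmas~\ref{lemma8} and~\ref{lemma9}.
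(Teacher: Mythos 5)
Your proposal is correct and follows essentially the same route as the paper's own proof: apply the sum rule of Lemma~\ref{basic} to the pair $(i,j)$, split $\mathcal{N}(i)$ into the $q-1$ clique siblings (handled by symmetry plus Lemma~\ref{lemma8}) and the two anchors in $\Delta_i$, then eliminate $\Omega^{(t+1)}_{i,\Delta_i}$ via Lemma~\ref{lemma9} and solve for $\Omega^{(t+1)}_{ij}$. The only difference is cosmetic: you make explicit the automorphism argument (interchangeability of the $q$ sibling nodes while fixing all old nodes) that the paper invokes implicitly as ``considering the symmetry.''
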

\begin{proof}
For the pair of nodes  $i\in\mathcal{W}_{t+1}$ and $j\in\mathcal{V}_t$, by Lemma~\ref{basic}, one obtains
	\begin{equation}\label{eq101}
		(q+1)\Omega^{(t+1)}_{ij}+\sum_{k\in{\mathcal{N}(i)}}\left(\Omega^{(t+1)}_{ik}-\Omega^{(t+1)}_{jk}\right)=2.
	\end{equation}
Considering the symmetry, for any node $k\in\mathcal{N}(i)$ except $m$ and $n$, one has
$\Omega^{(t+1)}_{jk}=\Omega^{(t+1)}_{ji}$, which yields
	\begin{equation}\label{eq102}
	\sum_{k\in{\mathcal{N}(i)}}\Omega^{(t+1)}_{jk}=(q-1)\Omega^{(t+1)}_{ji}+\Omega^{(t+1)}_{j,\Delta_{i}},
	\end{equation}
Moreover, according to Lemma~\ref{lemma8}, one obtains
	\begin{equation}\label{eq103}
	\sum_{k\in{\mathcal{N}(i)}}\Omega^{(t+1)}_{ik}=\frac{2(q-1)}{q+2}+\Omega^{(t+1)}_{i,\Delta_{i}}.
	\end{equation}
Combining~\eqref{eq101},~\eqref{eq102} and~\eqref{eq103} yields
	\begin{equation}
	2\Omega^{(t+1)}_{ij}+\Omega^{(t+1)}_{i,\Delta_{i}}-\Omega^{(t+1)}_{j,\Delta_{i}}=\frac{6}{q+2},
	\end{equation}
which, together with Lemma~\ref{lemma9}, gives
	\begin{equation}
		2\Omega^{(t+1)}_{ij}+\frac{1}{2}\Omega^{(t+1)}_{\Delta_{i}}-\Omega^{(t+1)}_{j,\Delta_{i}}=\frac{3}{q+2},
	\end{equation}
a formula equivalent to~\eqref{lemma10eq}.
\end{proof}

\begin{lemma}\label{lemma11}
For $t\geqslant0$, a pair of nonadjacent  nodes $i$ and $ j$, both in $\mathcal{W}_{t+1}$, satisfy
	\begin{equation} \Omega^{(t+1)}_{ij}=\frac{3}{q+2}-\frac{1}{4}\left(\Omega^{(t+1)}_{\Delta_{i}}+\Omega^{(t+1)}_{\Delta_{j}}\right)+\frac{1}{4}\Omega^{(t+1)}_{\Delta_{i},\Delta_{j}}.
	\end{equation}
\end{lemma}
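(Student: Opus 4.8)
The plan is to mirror the strategy of Lemmas~\ref{lemma9} and~\ref{lemma10}: start from the sum rule of Lemma~\ref{basic} applied to the pair $i,j$, reduce the neighbor sum using the local symmetry of the clique that created $i$, and then eliminate the resulting auxiliary terms with the already-proved Lemmas~\ref{lemma8}, \ref{lemma9}, and~\ref{lemma10}. First I would write the sum rule for $i\in\mathcal{W}_{t+1}$, using $d_i=q+1$:
\[
(q+1)\Omega^{(t+1)}_{ij}+\sum_{k\in\mathcal{N}(i)}\bigl(\Omega^{(t+1)}_{ik}-\Omega^{(t+1)}_{jk}\bigr)=2.
\]
The set $\mathcal{N}(i)$ splits into the two old nodes $\Delta_i=\{m,n\}$ and the $q-1$ new nodes sharing $i$'s clique. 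Since $i$ and $j$ are nonadjacent they lie in different cliques, so $j$ is not among these $q-1$ clique-mates; the automorphism that permutes $i$ with any of its clique-mates then fixes $j$, whence $\Omega^{(t+1)}_{jk}=\Omega^{(t+1)}_{ji}$ for each such $k$, exactly as in the proof of Lemma~\ref{lemma10}. Using Lemma~\ref{lemma8} to evaluate $\Omega^{(t+1)}_{ik}=\tfrac{2}{q+2}$ on the clique-mates, the two neighbor sums become $\tfrac{2(q-1)}{q+2}+\Omega^{(t+1)}_{i,\Delta_i}$ and $(q-1)\Omega^{(t+1)}_{ij}+\Omega^{(t+1)}_{j,\Delta_i}$. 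Substituting and cancelling the term $(q-1)\Omega^{(t+1)}_{ij}$ against $(q+1)\Omega^{(t+1)}_{ij}$ yields the clean intermediate identity
\[
2\Omega^{(t+1)}_{ij}+\Omega^{(t+1)}_{i,\Delta_i}-\Omega^{(t+1)}_{j,\Delta_i}=\frac{6}{q+2}.
\]

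The final step is to replace $\Omega^{(t+1)}_{i,\Delta_i}$ by Lemma~\ref{lemma9} and to expand $\Omega^{(t+1)}_{j,\Delta_i}=\Omega^{(t+1)}_{jm}+\Omega^{(t+1)}_{jn}$ by applying Lemma~\ref{lemma10} to $j\in\mathcal{W}_{t+1}$ against each of the old nodes $m,n\in\mathcal{V}_t$. Adding the two instances of Lemma~\ref{lemma10} and recognizing $\Omega^{(t+1)}_{\Delta_j,m}+\Omega^{(t+1)}_{\Delta_j,n}=\Omega^{(t+1)}_{\Delta_i,\Delta_j}$ gives $\Omega^{(t+1)}_{j,\Delta_i}=\tfrac{3}{q+2}-\tfrac12\Omega^{(t+1)}_{\Delta_j}+\tfrac12\Omega^{(t+1)}_{\Delta_i,\Delta_j}$. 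Inserting both expressions into the intermediate identity, the constant terms combine as $\tfrac{6}{q+2}-\tfrac{3}{q+2}+\tfrac{3}{q+2}=\tfrac{6}{q+2}$, and dividing by two produces the stated formula.

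The one place that needs care — the main obstacle — is the symmetry step that collapses the sum over $i$'s clique-mates to a single term $\Omega^{(t+1)}_{ij}$: this is precisely where the hypothesis that $i$ and $j$ are nonadjacent is essential, since it guarantees that $j$ lies outside the clique of $i$ and is therefore fixed by the permuting automorphism. Everything else is the same bookkeeping as in Lemmas~\ref{lemma9} and~\ref{lemma10}, the only genuinely new ingredient being that Lemma~\ref{lemma10} must be invoked twice (once for $m$ and once for $n$) so that the two cross-terms assemble into the set-to-set quantity $\Omega^{(t+1)}_{\Delta_i,\Delta_j}$.
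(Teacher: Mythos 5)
Your proposal is correct and follows essentially the same route as the paper's own proof: the sum rule of Lemma~\ref{basic}, the clique-symmetry reduction of the neighbor sums (as in Lemma~\ref{lemma10}), the intermediate identity $2\Omega^{(t+1)}_{ij}+\Omega^{(t+1)}_{i,\Delta_i}-\Omega^{(t+1)}_{j,\Delta_i}=\tfrac{6}{q+2}$, and then elimination via Lemma~\ref{lemma9} together with two applications of Lemma~\ref{lemma10}. Your explicit justification of the symmetry step via an automorphism fixing $j$ (and the remark that this is exactly where nonadjacency is needed) is a point the paper leaves implicit, but it is the same argument.
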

\begin{proof}
	For two nonadjacent nodes $i$ and $j$ in $\mathcal{W}_{t+1}$, by Lemma~\ref{basic}, one has
	\begin{equation}\label{eq111}
		(q+1)\Omega^{(t+1)}_{ij}+\sum_{k\in{\mathcal{N}(i)}}\left(\Omega^{(t+1)}_{ik}-\Omega^{(t+1)}_{jk}\right)=2.
	\end{equation}
Following the same process as in the proof of Lemma~\ref{lemma10}, one obtains
	\begin{flalign}
	\sum_{k\in{\mathcal{N}(i)}}\Omega^{(t+1)}_{ik}&=\frac{2(q-1)}{q+2}+\Omega^{(t+1)}_{i,\Delta_{i}},\label{eq112}\\
	\sum_{k\in{\mathcal{N}(i)}}\Omega^{(t+1)}_{jk}&=(q-1)\Omega^{(t+1)}_{ji}+\Omega^{(t+1)}_{j,\Delta_{i}}.\label{eq113}
	\end{flalign}
Combining~\eqref{eq111},~\eqref{eq112} and~\eqref{eq113} yields
	\begin{equation}
	2\Omega^{(t+1)}_{ij}+\Omega^{(t+1)}_{i,\Delta_{i}}-\Omega^{(t+1)}_{j,\Delta_{i}}=\frac{6}{q+2}\,,
	\end{equation}
which, together with Lemmas~\ref{lemma9} and~\ref{lemma10}, leads to
	\begin{small}
		\begin{align}
	&\Omega^{(t+1)}_{ij}=\frac{1}{2}\left(\frac{6}{q+2}-\Omega^{(t+1)}_{i,\Delta_{i}}+\Omega^{(t+1)}_{j,\Delta_{i}}\right)\notag\\
		=&\frac{1}{2}\left(\frac{3}{q\!+\!2}\!-\!\frac{1}{2}\Omega^{(t\!+\!1)}_{\Delta_{i}}\!\!+\!\!\sum_{k\in\Delta_{i}}\!\frac{1}{2}\left(\frac{3}{q\!+\!2}\!-\!\frac{1}{2}\Omega^{(t\!+\!1)}_{\Delta_{j}}\!\!+\!\Omega^{(t\!+\!1)}_{\Delta_{j},k}\right)\right)\notag\\
		=&\frac{1}{2}\left(\frac{3}{q\!+\!2}\!-\!\frac{1}{2}\Omega^{(t\!+\!1)}_{\Delta_{i}}\!+\!\left(\frac{3}{q\!+\!2}\!-\!\frac{1}{2}\Omega^{(t\!+\!1)}_{\Delta_{j}}\!+\!\frac{1}{2}\Omega^{(t\!+\!1)}_{\Delta_{i},\Delta_{j}}\right)\right)\notag\\
		=&\frac{3}{q+2}-\frac{1}{4}\left(\Omega^{(t+1)}_{\Delta_{i}}+\Omega^{(t+1)}_{\Delta_{j}}\right)+\frac{1}{4}\Omega^{(t+1)}_{\Delta_{i},\Delta_{j}}.
		\end{align}
	\end{small}
Thus, the result follows.
\end{proof}

\section{Exact Solutions to Various Kirchhoff Indices}

In this section, we determine the multiplicative degree-Kirchhoff index, additive degree-Kirchhoff index, and Kirchhoff index for graph $\mathcal{G}_q(t)$. To do so, we define three more quantities about resistance distances related to
 graph $\mathcal{G}_q(t)$. For two subsets $X $ and $Y $ of set $\mathcal{V}_t$ for nodes in $\mathcal{G}_q(t)$, define
\begin{align}
&R_{X,Y}(t)=\sum_{i\in X,j\in Y}\Omega^{(t)}_{ij},\\
&R^\ast_{X,Y}(t)=\sum_{i\in X,j\in Y}(d_id_j)\Omega^{(t)}_{ij},\\
&R^+_{X,Y}(t)=\sum_{i\in X,j\in Y}(d_i+d_j)\Omega^{(t)}_{ij}.
\end{align}
Then, $R_{\mathcal{V}_t,\mathcal{V}_t}(t)$, $R^\ast_{\mathcal{V}_t,\mathcal{V}_t}(t)$ and $R^+_{\mathcal{V}_t,\mathcal{V}_t}(t)$ are, respectively, the Kirchhoff index, multiplicative degree-Kirchhoff index, and additive degree-Kirchhoff index of graph $\mathcal{G}_q(t)$.

For $t=0$, it is easy to derive $R_{\mathcal{V}_{0},\mathcal{V}_{0}}(0)=2(q+1)$, $R^{\ast}_{\mathcal{V}_0,\mathcal{V}_0}(0)=2(q+1)^3$, and $R^{+}_{\mathcal{V}_{0},\mathcal{V}_{0}}(0)=4(q+1)^2$.
To obtain explicit formulas for  $R_{\mathcal{V}_t,\mathcal{V}_t}(t)$, $R^\ast_{\mathcal{V}_t,\mathcal{V}_t}(t)$, and $R^+_{\mathcal{V}_t,\mathcal{V}_t}(t)$ for all $t\geq 0$, some intermediary results are needed.
\begin{lemma}\label{lemma12}
For graph $\mathcal{G}_q(t+1)$ with $t\geqslant0$,
	\begin{equation}
		\sum_{i\in \mathcal{W}_{t+1}}\Omega^{(t+1)}_{\Delta_{i}}=\frac{2q(N_t-1)}{q+2}.
	\end{equation}
\end{lemma}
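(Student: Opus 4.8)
The plan is to regroup the sum over new nodes into a sum over the edges of $\mathcal{G}_q(t)$ that spawned them, and then reduce everything to a quantity governed by Foster's First Theorem.

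First I would exploit the construction rule: at iteration $t+1$, each existing edge $(m,n)\in\mathcal{E}_t$ receives a fresh copy of $\mathcal{K}_q$ whose $q$ nodes are all joined to both $m$ and $n$. Hence every new node $i\in\mathcal{W}_{t+1}$ has $\Delta_i=\{m,n\}$ equal to the endpoints of exactly one edge of $\mathcal{G}_q(t)$, and conversely each edge $(m,n)$ of $\mathcal{G}_q(t)$ is the common old-neighbor pair of precisely $q$ new nodes. This gives a clean $q$-to-$1$ correspondence between $\mathcal{W}_{t+1}$ and $\mathcal{E}_t$, so that
\begin{equation}
\sum_{i\in\mathcal{W}_{t+1}}\Omega^{(t+1)}_{\Delta_i}=q\sum_{(m,n)\in\mathcal{E}_t}\Omega^{(t+1)}_{mn}.
\end{equation}

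Next I would invoke Lemma~\ref{lemma7}, which applies because both $m$ and $n$ are old nodes in $\mathcal{V}_t$: it yields $\Omega^{(t+1)}_{mn}=\frac{2}{q+2}\Omega^{(t)}_{mn}$. Substituting this into the previous display pulls the constant $\frac{2}{q+2}$ outside the sum and leaves $\frac{2q}{q+2}\sum_{(m,n)\in\mathcal{E}_t}\Omega^{(t)}_{mn}$. The remaining sum is exactly the sum of resistance distances over all adjacent node pairs of $\mathcal{G}_q(t)$, so Foster's First Theorem (Lemma~\ref{Foster}) evaluates it to $N_t-1$. Combining the two factors gives the claimed value $\frac{2q(N_t-1)}{q+2}$.

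The only place that needs care is the bookkeeping of the $q$-to-$1$ correspondence: one must confirm that the $q$ new nodes attached to a common edge all share the same $\Delta_i$, and that no new node is counted for more than one edge, both of which follow directly from the definition of the edge operation. Apart from this counting step, the proof is essentially an application of Lemma~\ref{lemma7} followed by Foster's theorem, so I expect no substantive analytical obstacle.
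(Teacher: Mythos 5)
Your proof is correct and follows exactly the paper's own argument: the $q$-to-$1$ correspondence between new nodes and edges of $\mathcal{G}_q(t)$, followed by the rescaling $\Omega^{(t+1)}_{mn}=\frac{2}{q+2}\Omega^{(t)}_{mn}$ from Lemma~\ref{lemma7}, and finally Foster's First Theorem to evaluate the edge sum as $N_t-1$. No gaps; your extra remark on the bookkeeping is the same counting fact the paper states at the outset.
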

\begin{proof}
Note that every edge in $\mathcal{G}_q(t)$ creates exactly $q$ new nodes of $\mathcal{G}_q(t+1)$. Summing $\Omega^{(t+1)}_{\Delta_{i}}$ over $\Delta_{i}$ of all  nodes $i\in  \mathcal{W}_{t+1}$ is equivalent to summing $\Omega^{(t+1)}_{xy}$ for $q$ times over all edges $(x,y)$ belonging to $\mathcal{E}_t$. Then, by Lemma~\ref{Foster}, one obtains
	\begin{equation}
	\begin{split}
		\sum_{i\in \mathcal{W}_{t+1}}\Omega^{(t+1)}_{\Delta_{i}}&=q\sum_{(x,y)\in \mathcal{E}_t}\Omega^{(t+1)}_{xy}\\
		&=q\sum_{(x,y)\in \mathcal{E}_t}\frac{2}{q+2}\Omega^{(t)}_{xy}\\
		&=\frac{2q(N_t-1)}{q+2},
	\end{split}
	\end{equation}
This completes the proof.
\end{proof}
\begin{lemma}\label{lemma13}
For $t\geqslant0$ and a set $Y\subseteq\mathcal{V}_{t}$, one has
	\begin{equation}
		\sum_{i\in \mathcal{W}_{t+1}}\Omega^{(t+1)}_{\Delta_{i},Y}=\sum_{x\in \mathcal{V}_{t}}qd_x^{(t)}\Omega^{(t+1)}_{x,Y}.
	\end{equation}
\end{lemma}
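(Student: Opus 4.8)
The plan is to mimic the double-counting argument already used in the proof of Lemma~\ref{lemma12}, followed by a handshaking-type conversion from a sum over edges to a sum over nodes. The key structural fact is that the edge operation producing $\mathcal{G}_q(t+1)$ attaches, to each edge $(x,y)\in\mathcal{E}_t$, exactly $q$ brand-new nodes, and conversely every new node $i\in\mathcal{W}_{t+1}$ arises from exactly one such edge, with $\Delta_i$ being precisely the two endpoints of that edge. Consequently, summing any quantity $g(\Delta_i)$ over $i\in\mathcal{W}_{t+1}$ equals $q$ times the sum of $g(\{x,y\})$ taken over all edges $(x,y)\in\mathcal{E}_t$.

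First I would unfold the inner object using the definition of $\Omega^{(t+1)}_{X,Y}$: writing $\Delta_i=\{m,n\}$, we have $\Omega^{(t+1)}_{\Delta_i,Y}=\Omega^{(t+1)}_{m,Y}+\Omega^{(t+1)}_{n,Y}$, where $\Omega^{(t+1)}_{x,Y}=\sum_{y\in Y}\Omega^{(t+1)}_{xy}$ is the singleton-set version of the same notation. Applying the node--edge correspondence above then rewrites the left-hand side as
\begin{equation*}
\sum_{i\in\mathcal{W}_{t+1}}\Omega^{(t+1)}_{\Delta_i,Y}=q\sum_{(x,y)\in\mathcal{E}_t}\left(\Omega^{(t+1)}_{x,Y}+\Omega^{(t+1)}_{y,Y}\right).
\end{equation*}

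Next I would convert the edge sum into a node sum. For a fixed old node $x\in\mathcal{V}_t$, the term $\Omega^{(t+1)}_{x,Y}$ is counted once for each edge of $\mathcal{E}_t$ incident to $x$, that is, exactly $d_x^{(t)}$ times. Hence $\sum_{(x,y)\in\mathcal{E}_t}(\Omega^{(t+1)}_{x,Y}+\Omega^{(t+1)}_{y,Y})=\sum_{x\in\mathcal{V}_t}d_x^{(t)}\Omega^{(t+1)}_{x,Y}$, and multiplying by the factor $q$ carried through from the first step yields exactly the claimed right-hand side.

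I expect no serious obstacle here; unlike Lemma~\ref{lemma7}, no cross-level recursion is invoked, since every resistance distance on both sides is measured in $\mathcal{G}_q(t+1)$ and $d_x^{(t)}$ is simply the degree in the old graph. The only point requiring care is the bookkeeping: keeping the multiplicity $q$ (new nodes per old edge) cleanly separated from the degree $d_x^{(t)}$ (edge-incidences per old node), and applying the singleton-versus-set conventions in $\Omega^{(t+1)}_{x,Y}$ and $\Omega^{(t+1)}_{\Delta_i,Y}$ consistently throughout.
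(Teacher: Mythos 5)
Your proof is correct and uses essentially the same double-counting argument as the paper: the paper counts directly that each old node $x$ lies in $\Delta_i$ for exactly $d^{(t+1)}_x-d^{(t)}_x=qd^{(t)}_x$ new nodes $i$, while you obtain the same multiplicity by factoring it through the edge set ($q$ new nodes per old edge, then $d_x^{(t)}$ incidences per old node). This is only a cosmetic reorganization, so no further comparison is needed.
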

\begin{proof}
For an arbitrary node $x\in\mathcal{V}_t$, there are $d^{(t+1)}_x-d^{(t)}_x=qd^{(t)}_x$ new nodes in $\mathcal{W}_{t+1}$ that are neighbors of $x$. Thus, $\Omega^{(t+1)}_{x,Y}$ is summed $qd^{(t)}_x$  times for each node $x\in \mathcal{V}_{t}$.
\end{proof}

Now, we are in position to prove the main results.
\begin{theorem}\label{lemma14}
For graph $\mathcal{G}_q(t)$ with  $t\geqslant0$, its multiplicative degree-Kirchhoff  index is
	 \begin{small}
	 	\begin{equation}\label{mul}
	 	\begin{aligned}
	 	&R^{\ast}_{\mathcal{V}_t,\mathcal{V}_t}(t)\\
	 	=&-(q+4)(q+1)^2\left(\frac{(q+2)(q+1)^2}{2}\right)^t\\
	 	&+\frac{2(q+2)(q+1)^2}{q+3}\left(\frac{(q+1)(q+2)}{2}\right)^t\\
	 	&+\frac{(q+2)(3q+7)(q+1)^2}{q+3}\left(\frac{(q+1)(q+2)}{2}\right)^{2t}.
	 	\end{aligned}
	 	\end{equation}
	 \end{small}
\end{theorem}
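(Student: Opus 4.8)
The plan is to establish a first-order linear recursion for $R^{\ast}_{\mathcal{V}_{t+1},\mathcal{V}_{t+1}}(t+1)$ in terms of $R^{\ast}_{\mathcal{V}_t,\mathcal{V}_t}(t)$ and then solve it. First I would partition $\mathcal{V}_{t+1}=\mathcal{V}_t\cup\mathcal{W}_{t+1}$ and split the defining double sum into three blocks: old--old pairs (both in $\mathcal{V}_t$), old--new pairs (one in each set, counted with the factor $2$ from the two orderings), and new--new pairs (both in $\mathcal{W}_{t+1}$). Throughout I would use the two degree facts that for an old node $d_j^{(t+1)}=(q+1)d_j^{(t)}$ and for every new node $d_i^{(t+1)}=q+1$, together with Lemma~\ref{lemma7}, which gives $\Omega^{(t+1)}_{ij}=\frac{2}{q+2}\Omega^{(t)}_{ij}$ for old pairs. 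The old--old block then collapses immediately to $\frac{2(q+1)^2}{q+2}R^{\ast}_{\mathcal{V}_t,\mathcal{V}_t}(t)$.

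For the old--new block I would substitute Lemma~\ref{lemma10}, reducing $\sum_{i\in\mathcal{W}_{t+1},\,j\in\mathcal{V}_t}d_j^{(t)}\Omega^{(t+1)}_{ij}$ to three pieces: a constant piece handled by $\sum_{j\in\mathcal{V}_t}d_j^{(t)}=2M_t$, a piece in $\sum_i\Omega^{(t+1)}_{\Delta_i}$ evaluated by Lemma~\ref{lemma12}, and the cross piece $\sum_{i,j}d_j^{(t)}\Omega^{(t+1)}_{\Delta_i,j}$, which by Lemma~\ref{lemma13} (applied with $Y=\{j\}$) and then Lemma~\ref{lemma7} becomes $\frac{2q}{q+2}R^{\ast}_{\mathcal{V}_t,\mathcal{V}_t}(t)$. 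The new--new block is the most laborious: I would separate adjacent and nonadjacent pairs. Adjacent new pairs lie in a common $\mathcal{K}_q$ attached to some edge, so each contributes $\frac{2}{q+2}$ by Lemma~\ref{lemma8}, and their number is $q(q-1)M_t$ (ordered). Nonadjacent pairs are governed by Lemma~\ref{lemma11}; the key simplification is that two new nodes in the same clique share the same neighbor set $\Delta_i=\Delta_j$, so on such pairs $\Omega^{(t+1)}_{\Delta_i,\Delta_j}=2\Omega^{(t+1)}_{\Delta_i}$, which lets me evaluate the full ordered double sum of $\Omega^{(t+1)}_{\Delta_i,\Delta_j}$ and then subtract the adjacent part.

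The central computation is this double sum. Applying Lemma~\ref{lemma13} twice (peeling off $\Delta_i$ and then $\Delta_j$) and Lemma~\ref{lemma7} yields
\begin{equation*}
\sum_{i,j\in\mathcal{W}_{t+1}}\Omega^{(t+1)}_{\Delta_i,\Delta_j}
=q^2\sum_{x,y\in\mathcal{V}_t}d_x^{(t)}d_y^{(t)}\Omega^{(t+1)}_{xy}
=\frac{2q^2}{q+2}R^{\ast}_{\mathcal{V}_t,\mathcal{V}_t}(t),
\end{equation*}
while the residual $\sum_i\Omega^{(t+1)}_{\Delta_i}$ terms are supplied by Lemma~\ref{lemma12} and the adjacent corrections by Foster's theorem (Lemma~\ref{Foster}) through $\sum_{(x,y)\in\mathcal{E}_t}\Omega^{(t+1)}_{xy}=\frac{2}{q+2}(N_t-1)$.

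Collecting the three blocks, the $R^{\ast}(t)$ coefficients combine to $\frac{2(q+1)^2}{q+2}+\frac{2q(q+1)^2}{q+2}+\frac{q^2(q+1)^2}{2(q+2)}=\frac{(q+1)^2(q+2)}{2}$, and all remaining terms are explicit multiples of $N_t$, $M_t$, and $W_{t+1}$. Using~\eqref{M},~\eqref{N}, and~\eqref{W}, these forcing terms are linear combinations of $\left[\frac{(q+1)(q+2)}{2}\right]^{t}$ and its square, so the recursion reads $R^{\ast}(t+1)=\frac{(q+1)^2(q+2)}{2}R^{\ast}(t)+f(t)$. Solving this inhomogeneous first-order recursion with the initial value $R^{\ast}_{\mathcal{V}_0,\mathcal{V}_0}(0)=2(q+1)^3$ produces a homogeneous term with base $\frac{(q+1)^2(q+2)}{2}$ plus particular terms with bases $\frac{(q+1)(q+2)}{2}$ and $\left(\frac{(q+1)(q+2)}{2}\right)^2$, matching~\eqref{mul}. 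I expect the main obstacle to be the error-prone accounting in the new--new block---correctly counting ordered adjacent and nonadjacent pairs and tracking the degree weights and the factor-$\tfrac14$ cross terms---rather than any conceptual difficulty; once the displayed double sum is in hand, assembling and solving the recursion is routine.
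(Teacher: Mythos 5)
Your proposal is correct and follows essentially the same route as the paper's own proof: the same three-block decomposition into old--old, old--new, and new--new pairs, the same use of Lemmas~\ref{lemma7}, \ref{lemma8}, \ref{lemma10}, \ref{lemma11}, \ref{lemma12}, and~\ref{lemma13}, including the key identity $\sum_{i,j\in\mathcal{W}_{t+1}}\Omega^{(t+1)}_{\Delta_i,\Delta_j}=\frac{2q^2}{q+2}R^{\ast}_{\mathcal{V}_t,\mathcal{V}_t}(t)$ obtained by applying Lemma~\ref{lemma13} twice. Your collected coefficient $\frac{(q+1)^2(q+2)}{2}$ and the resolution of the resulting inhomogeneous recursion from $R^{\ast}_{\mathcal{V}_0,\mathcal{V}_0}(0)=2(q+1)^3$ match the paper's Appendix~\ref{AppC} exactly.
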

The proof of Theorem~\ref{lemma14} is provided in~\ref{AppC}.

\begin{theorem}\label{lemma15}
For graph $\mathcal{G}_q(t)$ with  $t\geqslant0$,	its additive degree-Kirchhoff index is
	\begin{small}
		 \begin{align}\label{eqxxx}
		&R^{+}_{\mathcal{V}_t,\mathcal{V}_t}(t)\notag\\
		=&\frac{4(q+2)(q+1)^2}{(q+3)^2}\left(\frac{(q+1)(q+2)}{2}\right)^t\notag\\	&+\frac{2(3q\!+\!7)(q\!+\!1)^2(q^3\!+\!8q^2\!+\!22q\!+\!20)}{(q\!+\!3)^2(q^2\!+\!5q\!+\!8)}\left(\frac{(q\!+\!1)(q\!+\!2)}{2}\right)^{2t}\notag\\
		&-\frac{2(q+4)(q+1)^2}{q+3}\left(\frac{(q+2)(q+1)^2}{2}\right)^t\notag\\
		&+\frac{2(q+1)^2(q^2+9q+20)}{(q+3)(q^2+5q+8)}\left(q+1\right)^t		+\frac{2(q+1)^2}{(q+3)^2}.
		\end{align}
	\end{small}
\end{theorem}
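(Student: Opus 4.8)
The plan is to derive a first-order linear recursion for $R^{+}_{\mathcal{V}_{t+1},\mathcal{V}_{t+1}}(t+1)$ in terms of generation-$t$ quantities, and then solve it in closed form using the explicit expression for the multiplicative index from Theorem~\ref{lemma14}. First I would split $\mathcal{V}_{t+1}=\mathcal{V}_t\cup\mathcal{W}_{t+1}$ into old and newly created nodes and, using the symmetry of $(d_i+d_j)\Omega_{ij}$, write
\begin{small}
\begin{align}
&R^{+}_{\mathcal{V}_{t+1},\mathcal{V}_{t+1}}(t+1)\notag\\
=&R^{+}_{\mathcal{V}_{t},\mathcal{V}_{t}}(t+1)+2R^{+}_{\mathcal{V}_{t},\mathcal{W}_{t+1}}(t+1)+R^{+}_{\mathcal{W}_{t+1},\mathcal{W}_{t+1}}(t+1).\notag
\end{align}
\end{small}
The central bookkeeping device is the degree rescaling in~\eqref{d_v}: every old node has its degree multiplied by $q+1$ when passing from $\mathcal{G}_q(t)$ to $\mathcal{G}_q(t+1)$, while every new node has degree exactly $q+1$.

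Next I would reduce each of the three blocks to known objects. For the old--old block, Lemma~\ref{lemma7} gives $\Omega^{(t+1)}_{ij}=\tfrac{2}{q+2}\Omega^{(t)}_{ij}$, and the degree rescaling yields $R^{+}_{\mathcal{V}_{t},\mathcal{V}_{t}}(t+1)=\tfrac{2(q+1)}{q+2}R^{+}_{\mathcal{V}_{t},\mathcal{V}_{t}}(t)$. For the new--new block, all degrees equal $q+1$, so it reduces to $2(q+1)R_{\mathcal{W}_{t+1},\mathcal{W}_{t+1}}(t+1)$; splitting into adjacent pairs (Lemma~\ref{lemma8}) and nonadjacent pairs (Lemma~\ref{lemma11}), and then using Lemmas~\ref{lemma12} and~\ref{lemma13} to collapse the sums over the neighbor pairs $\Delta_i$ into degree-weighted sums over old nodes, turns this block into a combination of $R^{\ast}_{\mathcal{V}_t,\mathcal{V}_t}(t)$ and polynomial terms in $N_t,M_t$. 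For the old--new block, Lemma~\ref{lemma10} expresses $\Omega^{(t+1)}_{ij}$ through $\Omega^{(t+1)}_{\Delta_i}$ and $\Omega^{(t+1)}_{\Delta_i,j}$; summing with and without the old-node degree weights and again applying Lemmas~\ref{lemma12} and~\ref{lemma13} produces an $R^{+}_{\mathcal{V}_t,\mathcal{V}_t}(t)$ contribution together with $R^{\ast}_{\mathcal{V}_t,\mathcal{V}_t}(t)$ and lower-order terms. Collecting the coefficients of $R^{+}_{\mathcal{V}_t,\mathcal{V}_t}(t)$ from the old--old and (doubled) old--new blocks gives $\tfrac{2(q+1)}{q+2}+\tfrac{q(q+1)}{q+2}=q+1$, so the recursion takes the form $R^{+}_{\mathcal{V}_{t+1},\mathcal{V}_{t+1}}(t+1)=(q+1)R^{+}_{\mathcal{V}_{t},\mathcal{V}_{t}}(t)+\Phi(t)$, where $\Phi(t)$ is fully explicit: a linear combination of $R^{\ast}_{\mathcal{V}_t,\mathcal{V}_t}(t)$ (known from Theorem~\ref{lemma14}) and of the exponentials $N_t,M_t$ and their products given by~\eqref{M}--\eqref{W}. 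Notably, the plain Kirchhoff index does not enter.

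Finally I would solve this inhomogeneous first-order recursion from the initial value $R^{+}_{\mathcal{V}_0,\mathcal{V}_0}(0)=4(q+1)^2$. Since $\Phi(t)$ is a sum of pure geometric modes in the bases $\tfrac{(q+1)(q+2)}{2}$, $\tfrac{(q+2)(q+1)^2}{2}$ and a constant, standard variation of parameters (equivalently, matching an ansatz that is a linear combination of these modes plus the homogeneous mode $(q+1)^t$) produces the four geometric terms and the constant in~\eqref{eqxxx}; the $(q+1)^t$ term is exactly the homogeneous solution of the recursion whose multiplier is $q+1$. The main obstacle I anticipate is not the recursion-solving step but the combinatorial bookkeeping in the new--new and old--new blocks: correctly counting adjacent versus nonadjacent new pairs within each $q$-clique gadget, and carefully evaluating the double sum $\sum_{i,j\in\mathcal{W}_{t+1}}\Omega^{(t+1)}_{\Delta_i,\Delta_j}$ (where coinciding or gadget-mate indices force $\Delta_i=\Delta_j$) by a double application of Lemma~\ref{lemma13} to rewrite it as $\tfrac{2q^2}{q+2}R^{\ast}_{\mathcal{V}_t,\mathcal{V}_t}(t)$ minus the diagonal and adjacent corrections supplied by Lemma~\ref{lemma12}. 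Getting every such constant right is what makes the final coefficients in~\eqref{eqxxx} come out as stated.
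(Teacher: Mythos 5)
Your proposal follows essentially the same route as the paper's proof: the same old/new block decomposition, the same reduction of the cross and new--new blocks via Lemmas~\ref{lemma8}--\ref{lemma13} (reusing the $R^{\ast}$ block sums from the proof of Theorem~\ref{lemma14}), the same resulting recursion $R^{+}_{\mathcal{V}_{t+1},\mathcal{V}_{t+1}}(t+1)=(q+1)R^{+}_{\mathcal{V}_t,\mathcal{V}_t}(t)+q(q+1)R^{\ast}_{\mathcal{V}_t,\mathcal{V}_t}(t)+\text{(explicit geometric terms)}$, and the same closed-form solution from $R^{+}_{\mathcal{V}_0,\mathcal{V}_0}(0)=4(q+1)^2$. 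Your coefficient check $\tfrac{2(q+1)}{q+2}+\tfrac{q(q+1)}{q+2}=q+1$ and the observation that the plain Kirchhoff index never enters both agree with the paper's Appendix~D.
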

The proof of Theorem~\ref{lemma15} is provided in~\ref{AppD}.

\begin{theorem}\label{lemma16}
For graph $\mathcal{G}_q(t)$ with  $t\geqslant0$, its Kirchhoff index is
	\begin{small}
		\begin{align}\label{eq160}
		&R_{\mathcal{V}_t,\mathcal{V}_t}(t)\notag\\
		=&\frac{2(q+2)(q^3+8q^2+15q+8)}{(q+3)^2(q^2+5q+8)}\left(\frac{(q+1)(q+2)}{2}\right)^t\notag\\
		&+\frac{(q+2)(q+4)(3q+7)(q+1)^2}{(q+3)^2(q^2+5q+8)}\left(\frac{(q+1)(q+2)}{2}\right)^{2t}\notag\\
		&-\frac{(q+4)(q+1)^2}{(q+3)^2}\left(\frac{(q+2)(q+1)^2}{2}\right)^{t}\notag\\
		&+\frac{2(q+1)^2(q^2+9q+20)}{(q+3)^2(q^2+5q+8)}\left(q+1\right)^t\notag\\
		&+\frac{4(q+1)(q+4)^2}{(q+3)^2(q^2+5q+8)}\left(\frac{2}{q+2}\right)^t		-\frac{2(q+1)}{(q+3)^2}.
		\end{align}
	\end{small}
\end{theorem}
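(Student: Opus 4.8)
The plan is to set up a first-order linear recursion for $R_{\mathcal{V}_{t+1},\mathcal{V}_{t+1}}(t+1)$ in terms of $R_{\mathcal{V}_t,\mathcal{V}_t}(t)$ together with the already-computed indices $R^{\ast}_{\mathcal{V}_t,\mathcal{V}_t}(t)$ and $R^{+}_{\mathcal{V}_t,\mathcal{V}_t}(t)$, and then to solve it. First I would split the defining sum of the Kirchhoff index of $\mathcal{G}_q(t+1)$ according to whether each endpoint is an old node in $\mathcal{V}_t$ or a new node in $\mathcal{W}_{t+1}$, writing $R_{\mathcal{V}_{t+1},\mathcal{V}_{t+1}}(t+1) = R_{\mathcal{V}_t,\mathcal{V}_t}(t+1) + 2R_{\mathcal{V}_t,\mathcal{W}_{t+1}}(t+1) + R_{\mathcal{W}_{t+1},\mathcal{W}_{t+1}}(t+1)$, where the resistances are those of $\mathcal{G}_q(t+1)$. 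The old--old block collapses immediately by Lemma~\ref{lemma7} to $\tfrac{2}{q+2}R_{\mathcal{V}_t,\mathcal{V}_t}(t)$, which will furnish the homogeneous part of the recursion.

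For the old--new block I would substitute Lemma~\ref{lemma10} and sum over $j\in\mathcal{V}_t$ and $i\in\mathcal{W}_{t+1}$. The constant and $\Omega^{(t+1)}_{\Delta_i}$ pieces are handled by elementary counting and Lemma~\ref{lemma12}, while the term $\sum_i\Omega^{(t+1)}_{\Delta_i,\mathcal{V}_t}$ is converted, by Lemma~\ref{lemma13} with $Y=\mathcal{V}_t$ together with Lemma~\ref{lemma7}, into $\tfrac{q}{q+2}R^{+}_{\mathcal{V}_t,\mathcal{V}_t}(t)$, using that $\sum_{x,j\in\mathcal{V}_t}d_x^{(t)}\Omega^{(t)}_{xj}=\tfrac12 R^{+}_{\mathcal{V}_t,\mathcal{V}_t}(t)$ by symmetry. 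For the new--new block I would separate adjacent pairs (those in the same $\mathcal{K}_q$ copy, evaluated by Lemma~\ref{lemma8}) from nonadjacent pairs (distinct copies, evaluated by Lemma~\ref{lemma11}). The constant and $\Omega^{(t+1)}_{\Delta_i}$ contributions again reduce via counting and Lemma~\ref{lemma12}; the crucial double sum $\sum_{i,j\in\mathcal{W}_{t+1}}\Omega^{(t+1)}_{\Delta_i,\Delta_j}$ I would reduce by applying Lemma~\ref{lemma13} twice to obtain $q^2\sum_{x,y\in\mathcal{V}_t}d_x^{(t)}d_y^{(t)}\Omega^{(t+1)}_{xy}=\tfrac{2q^2}{q+2}R^{\ast}_{\mathcal{V}_t,\mathcal{V}_t}(t)$, from which I would subtract the same-edge pairs (where $\Delta_i=\Delta_j$) using Foster's theorem (Lemma~\ref{Foster}) and Lemma~\ref{lemma7}.

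Collecting the three blocks yields a recursion of the form $R_{\mathcal{V}_{t+1},\mathcal{V}_{t+1}}(t+1)=\tfrac{2}{q+2}R_{\mathcal{V}_t,\mathcal{V}_t}(t)+g(t)$, where $g(t)$ becomes fully explicit once $R^{\ast}_{\mathcal{V}_t,\mathcal{V}_t}(t)$ and $R^{+}_{\mathcal{V}_t,\mathcal{V}_t}(t)$ from Theorems~\ref{lemma14} and~\ref{lemma15}, together with the counts $N_t$, $M_t$, $W_{t+1}$ from~\eqref{M},~\eqref{N}, and~\eqref{W}, are inserted. I would then solve this first-order linear recursion by unrolling it with the initial value $R_{\mathcal{V}_0,\mathcal{V}_0}(0)=2(q+1)$; since $g(t)$ is a fixed linear combination of the geometric sequences $\bigl(\tfrac{(q+1)(q+2)}{2}\bigr)^t$, its square, $\bigl(\tfrac{(q+2)(q+1)^2}{2}\bigr)^t$, $(q+1)^t$, and a constant, each forcing term produces a particular solution of the same type, while the homogeneous part supplies the $\bigl(\tfrac{2}{q+2}\bigr)^t$ term of the final formula. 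I expect the main obstacle to be twofold: correctly accounting for the adjacency structure of $\mathcal{W}_{t+1}$ in the new--new block, in particular the same-edge correction to the $\Omega^{(t+1)}_{\Delta_i,\Delta_j}$ double sum, and the lengthy though mechanical algebra of summing the several geometric forcing terms into the final five-exponential closed form, where keeping the $q$-dependent coefficients with denominators $(q+3)^2$ and $q^2+5q+8$ exact is the step most prone to error.
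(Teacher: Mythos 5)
Your proposal is correct and follows essentially the same route as the paper: the same old--old/old--new/new--new decomposition, with Lemma~\ref{lemma7} giving the homogeneous $\tfrac{2}{q+2}$ factor, Lemmas~\ref{lemma8}--\ref{lemma13} reducing the cross and new--new blocks to $\tfrac{q}{q+2}R^{+}_{\mathcal{V}_t,\mathcal{V}_t}(t)$ and $\tfrac{q^2}{2(q+2)}R^{\ast}_{\mathcal{V}_t,\mathcal{V}_t}(t)$ plus explicit counting terms, and the resulting first-order recursion solved via Theorems~\ref{lemma14} and~\ref{lemma15} with $R_{\mathcal{V}_0,\mathcal{V}_0}(0)=2(q+1)$. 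The only cosmetic difference is that the paper reuses the intermediate formulas for $R^{\ast}_{\beta,\beta}(t+1)$ and $R_{\alpha,\beta}(t+1)$ already derived in the proofs of Theorems~\ref{lemma14} and~\ref{lemma15} (writing $R_{\beta,\beta}=\tfrac{1}{(q+1)^2}R^{\ast}_{\beta,\beta}$), whereas you re-derive the same sums inline.
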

The proof of Theorem~\ref{lemma16} is provided in~\ref{AppE}.

Using the obtained Kirchhoff index, the average resistance distance $\langle\bar{\Omega}(t) \rangle$ for graph $\mathcal{G}_q(t)$ can be easily determined, as $\langle\bar{\Omega}(t) \rangle=\frac{R_{\mathcal{V}_t,\mathcal{V}_t}(t)}{N_t(N_t-1)}$.
\begin{theorem}\label{th1}
For graph $\mathcal{G}_q(t)$ with  $t\geqslant0$, its mean resistance distance $\langle\bar{\Omega}(t) \rangle$ is
	 \begin{small}
	 	\begin{align}\label{eq170}
	 	&\langle\bar{\Omega}(t) \rangle=\notag\\
 &\frac{(q\!+\!3)^2}{(q\!+\!1)^2(q\!+\!2)^2\!\left(\left(\frac{(q\!+\!1)(q\!+\!2)}{2}\right)^t\!\!\!+\!\frac{2}{q\!+\!1}\right)\!\!\left(\left(\frac{(q\!+\!1)(q\!+\!2)}{2}\right)^t\!\!\!+\!\frac{1}{q\!+\!2}\right)}\notag\\
	 	&\bigg(\frac{2(q+2)(q^3+8q^2+15q+8)}{(q+3)^2(q^2+5q+8)}\left(\frac{(q+1)(q+2)}{2}\right)^t\notag\\
	 	&+\frac{(q+2)(q+4)(3q+7)(q+1)^2}{(q+3)^2(q^2+5q+8)}\left(\frac{(q+1)(q+2)}{2}\right)^{2t}\notag\\
	 	&-\frac{(q+4)(q+1)^2}{(q+3)^2}\left(\frac{(q+2)(q+1)^2}{2}\right)^{t}\notag\\
	 	&+\frac{2(q+1)^2(q^2+9q+20)}{(q+3)^2(q^2+5q+8)}\left(q+1\right)^t\notag\\
	 	&+\frac{4(q+1)(q+4)^2}{(q+3)^2(q^2+5q+8)}\left(\frac{2}{q+2}\right)^t	 	-\frac{2(q+1)}{(q+3)^2}\bigg).
	 	\end{align}
	 \end{small}
\end{theorem}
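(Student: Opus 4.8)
The plan is to treat Theorem~\ref{th1} as a direct corollary of the definition $\langle\bar{\Omega}(t)\rangle = R_{\mathcal{V}_t,\mathcal{V}_t}(t)/[N_t(N_t-1)]$, so that the only genuine work is substituting the already-proved closed form for the Kirchhoff index from Theorem~\ref{lemma16} together with the node-count formula~\eqref{N}, and then simplifying the denominator into the factored shape displayed in~\eqref{eq170}. First I would take $N_t$ from~\eqref{N} and write it as $N_t = \frac{2}{q+3}\bigl[\frac{(q+1)(q+2)}{2}\bigr]^{t+1} + \frac{2(q+2)}{q+3}$. The numerator of $\langle\bar{\Omega}(t)\rangle$ is simply $R_{\mathcal{V}_t,\mathcal{V}_t}(t)$, which appears verbatim inside the big parentheses of~\eqref{eq170}, so no manipulation of it is needed; the entire content of the proof lives in showing that $N_t(N_t-1)$ equals the product displayed in the prefactor of~\eqref{eq170}.

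The key step is therefore the algebraic identity for the denominator. I would set $P := \bigl[\frac{(q+1)(q+2)}{2}\bigr]^t$ to keep the expressions compact, so that $\bigl[\frac{(q+1)(q+2)}{2}\bigr]^{t+1} = \frac{(q+1)(q+2)}{2}P$. Then
\begin{equation}
N_t = \frac{(q+1)(q+2)}{q+3}\,P + \frac{2(q+2)}{q+3} = \frac{(q+2)}{q+3}\bigl((q+1)P + 2\bigr).
\end{equation}
Factoring out the common constant, I would next compute
\begin{equation}
N_t - 1 = \frac{(q+1)(q+2)}{q+3}\,P + \frac{2(q+2)-(q+3)}{q+3} = \frac{(q+2)}{q+3}\bigl((q+1)P + 2\bigr) - 1,
\end{equation}
and then rearrange $N_t-1$ into the analogous factored form $\frac{(q+1)(q+2)}{q+3}\bigl(P + \tfrac{1}{q+2}\bigr)$ by a short check that $\frac{(q+1)(q+2)}{q+3}\cdot\frac{1}{q+2} = \frac{q+1}{q+3}$ matches the constant term $\frac{2(q+2)}{q+3}-1 = \frac{q+1}{q+3}$. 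Likewise I would rewrite the $N_t$ factor as $\frac{(q+1)(q+2)}{q+3}\bigl(P + \tfrac{2}{q+1}\bigr)$. Multiplying the two gives
\begin{equation}
N_t(N_t-1) = \frac{(q+1)^2(q+2)^2}{(q+3)^2}\Bigl(P + \tfrac{2}{q+1}\Bigr)\Bigl(P + \tfrac{1}{q+2}\Bigr),
\end{equation}
which, after substituting back $P = \bigl[\frac{(q+1)(q+2)}{2}\bigr]^t$, is exactly the reciprocal of the prefactor in~\eqref{eq170}.

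The main obstacle, such as it is, is purely bookkeeping: one must verify that the two constant shifts $\tfrac{2}{q+1}$ and $\tfrac{1}{q+2}$ land in the correct factors, since the symmetric-looking denominator in~\eqref{eq170} hides which factor came from $N_t$ and which from $N_t-1$. I would double-check this by confirming that $\frac{(q+1)(q+2)}{q+3}\cdot\frac{2}{q+1} = \frac{2(q+2)}{q+3}$ recovers the constant term of $N_t$ in~\eqref{N}, pinning down $\tfrac{2}{q+1}$ as the $N_t$ shift and $\tfrac{1}{q+2}$ as the $N_t-1$ shift. With the denominator thus identified and the numerator inherited unchanged from Theorem~\ref{lemma16}, dividing the two expressions yields~\eqref{eq170} directly, completing the proof.
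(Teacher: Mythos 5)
Your proposal is correct and takes essentially the same approach as the paper: the paper also obtains Theorem~\ref{th1} directly from the definition $\langle\bar{\Omega}(t)\rangle=R_{\mathcal{V}_t,\mathcal{V}_t}(t)/[N_t(N_t-1)]$ by substituting the closed form~\eqref{eq160} of Theorem~\ref{lemma16} together with the node count~\eqref{N}. Your explicit factorization $N_t(N_t-1)=\frac{(q+1)^2(q+2)^2}{(q+3)^2}\left(\left[\frac{(q+1)(q+2)}{2}\right]^t+\frac{2}{q+1}\right)\left(\left[\frac{(q+1)(q+2)}{2}\right]^t+\frac{1}{q+2}\right)$ is exactly the (correct) algebra that the paper leaves implicit.
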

In the limit of large $t$ ($t\rightarrow\infty$),~\eqref{eq170} converges to a  small $q$-dependent constant
     \begin{equation}\label{eq170a}
   \lim_{t\rightarrow\infty} \langle\bar{\Omega}_q(t) \rangle= \langle\bar{\Omega}_q\rangle=\frac{(q+4)(3q+7)}{(q+2)(q^2+5q+8)},
     \end{equation}
 demonstrating the impact of network geometry and topology.

In Figure~\ref{EffetRes}, we plot the average resistance distance $\langle\bar{\Omega}_q(t) \rangle$ as a function of $q$ and $t$ according to~\eqref{eq170}. Figure~\ref{EffetRes} shows that for all $q \geq  1$ and $t \geq 0$, the average resistance distance $\langle\bar{\Omega}_q(t) \rangle$  is a little greater than zero. Particularly, for massive graphs ($t \to \infty$), the average resistance distance $\langle\bar{\Omega}_q(t) \rangle$ expressed in~\eqref{eq170} tends a small constant $\langle\bar{\Omega}_q\rangle$ given by~\eqref{eq170a}, which
decreases with increasing parameter $q$.

\begin{figure}
\centering
\includegraphics[width=0.9\linewidth,trim=0 0 0 0]{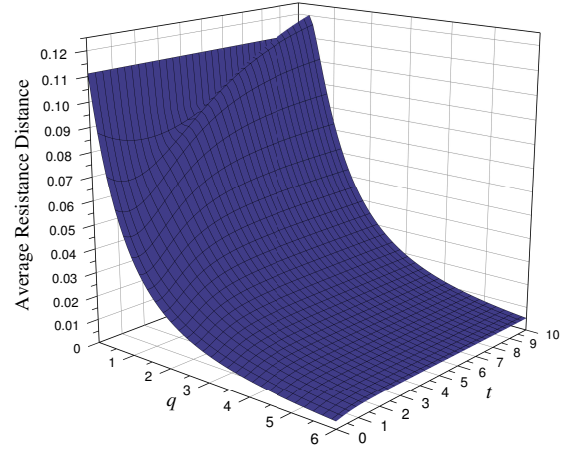}
\caption{ The average resistance distance $\langle\bar{\Omega} _q(t) \rangle$  of  network $\mathcal{G}_q(t)$ for various $q$ and $t$.}\label{EffetRes}
\end{figure}


\section{Conclusion}

Empirical studies have shown that in many realistic networks, such as biological and social networks, their structures involve non-pairwise interactions, that is simultaneous, higher-order interactions among a group of more than two nodes. This structure has a significant impact on various dynamical processes taking place on networks. In order to capture higher-order interactions, some models were proposed, for many of which simplexes are elementary building blocks.

In this paper, we presented an extensive analytical study of resistance distances for a family of iteratively generating networks. This network family consists of simplexes presenting higher-order interactions among nodes, characterized by a tunable parameter $q$, which determines many structural properties, such as power-law degree distribution, clustering coefficient, and spectral dimension. We obtained analytical or exact expressions for two-node resistance distance, multiplicative degree-Kirchhoff index, additive degree-Kirchhoff index, Kirchhoff index, and average resistance distance. Our explicit results show that the average resistance distance converges to a $q$-dependent constant, implying that simplicial structure has strong influence on the structural robustness. Since diverse dynamical processes (for example, noisy consensus~\cite{YiZhShCh17, QiZhYiLi19}) are determined by average resistance distance, we confidently conclude that higher-order organization in network structure greatly affects the network dynamics.

Note that the network $\mathcal{G}_q(t)$ under study can  be considered as a pure $(q+1)$-dimensional simplicial complex, by looking upon each $(q+1)$-simplex and all its faces as the constituent  simplices.  Thus, the properties and high-order interaction of  $\mathcal{G}_q(t)$ can be characterized by a corresponding weighted graph $\mathcal{G}'_q(t)$~\cite{CoBi16,CoBi17,CaBaCeFa20}, which is obtained from $\mathcal{G}_q(t)$ by assigning an approximate weight to each edge in $\mathcal{G}_q(t)$. In future, we will study various dynamical processes running in $\mathcal{G}_q(t)$, in order to explore the effects of the high-order interactions on dynamics~\cite{BaCeLaLqLuPaYoPe20}.


\section*{Acknowledgements}

This work was supported by the National
Natural Science Foundation of China (Nos. 61872093 and U20B2051), the National
Key R \& D Program of China (No. 2018YFB1305104), Shanghai Municipal Science and Technology
Major Project  (Nos.  2018SHZDZX01 and 2021SHZDZX03),  ZJ Lab, and Shanghai Center for Brain Science and Brain-Inspired Technology. Mingzhe Zhu was also supported by Fudan's Undergraduate Research
Opportunities Program (FDUROP) under Grant No. 20001.

\section*{Data Availability Statement}

No new data were generated or analysed in support of this research.

\appendix

\section{Proof of Lemma~\ref{PFProA}}\label{AppA}

\begin{proof}
In order to prove $A^{\alpha,\beta}_{t+1}A^{\beta,\alpha}_{t+1}=q(D_t+A_t)$, it suffices to show that the corresponding entries of the two matrices $A^{\alpha,\beta}_{t+1}A^{\beta,\alpha}_{t+1}$ and $q(D_t+A_t)$ are equivalent to each other. For simplicity, let $Y_t=q(D_t+A_t)$, whose entries  are: $Y_t(i,j)=qd_i^{(t)}$, if $i=j$ and $Y_t(i,j)=qA_t(i,j)$ otherwise. Let $Z_t=A^{\alpha,\beta}_{t+1}A^{\beta,\alpha}_{t+1}$. Below it is verified that the elements $Z_t(i,j)$ of $Z_t$ equal  those associated with $Y_t$.

Notice that matrix $A^{\beta,\alpha}_{t+1}$ can be partitioned into $N_t$ column vectors $x_i=(x_{i,N_t+1},x_{i,N_t+2},\dots,x_{i,N_{t+1}})^\top$ $(i=1,2,\dots, N_t)$ as $A^{\beta,\alpha}_{t+1}=(x_1,x_2,\dots,x_{N_t})$, where each $x_i$ represents the adjacency relation between the old node $i\in\alpha$ and all new nodes in $\beta$. Since $A^{\alpha,\beta}_{t+1}=(A^{\beta,\alpha}_{t+1})^\top$, one obtains $A^{\alpha,\beta}_{t+1}=(x_1,x_2,\dots,x_{N_t})^\top$. Hence,
	\begin{equation}
		\begin{aligned}
		A^{\alpha,\beta}_{t+1}A^{\beta,\alpha}_{t+1}=&(x_1,x_2,\dots,x_{N_t})^\top(x_1,x_2,\dots,x_{N_t})\\
		=&(x_i^\top x_j)_{N_t\times N_t},
		\end{aligned}
	\end{equation}
 making use of which the entries $Z_t(i,j)$ of $A^{\alpha,\beta}_{t+1}A^{\beta,\alpha}_{t+1}$ can be determined by distinguishing two cases: $i = j$ and $i\neq j$.

 For the first case of $i=j$, the diagonal entry  $Z_t(i,i)$ of $Z_t$ is $Z_t(i,i)=x_i^\top x_i$,
equal to  the number of neighbors for  node $i$, which belongs  to set $\beta$. Thus, $Z_t(i,i)=d^{(t+1)}_i-d^{(t)}_i=qd^{(t)}_i=Y_t(i,i)$.

For the second case of $i\neq j$, the non-diagonal entry  $Z_t(i,j)$ of matrix $Z_t$ is
	\begin{equation}
	\begin{aligned}
	Z_t(i,j)=&x_i^\top x_j=\sum_{k\in \beta}(x_{i,k}\cdot x_{j,k})\\
	=&\sum_{k\in \beta}\left(A_{t+1}(i,k)A_{t+1}(j,k)\right)\\
	=&\sum_{\substack{A_{t+1}(i,k)=1\\A_{t+1}(j,k)=1}}A_t(i,j)\\
	=&qA_t(i,j)=Y_t(i,j),
	\end{aligned}
	\end{equation}
where it has used the construction that each edge in
	$\mathcal{G}_q(t)$ creates $q$ new nodes at iteration $t + 1$.
\end{proof}

\section{Proof of Lemma~\ref{beauty}}\label{AppB}

\begin{proof}
Matrix  $(q+1)I-A^{\beta,\beta}_{t+1}$ is in fact  a
diagonal block matrix of order $W_{t+1}/ q$, given by
$(q+1)I-A^{\beta,\beta}_{t+1}={\rm {diag}}(C_q,C_q\ldots,C_q)$
where $C_q$ is a $q\times q$ matrix, with  the diagonal and non-diagonal entries being $q+1$ and $-1$, respectively.

Define $Q$ to be the inverse of matrix $(q+1)I-A^{\beta,\beta}_{t+1}$. Then, it can be written as  a
diagonal block matrix of order $W_{t+1}/ q$ as $Q={\rm {diag}}(F_q,F_q\ldots,F_q)$, where $F_q$ is a $q\times q$ matrix, with  the diagonal and non-diagonal entries being $3/(2q+4)$ and $1/(2q+4)$, respectively.

Let $T=A^{\alpha,\beta}_{t+1}Q$. Then, for  $i \in \mathcal{V}_{t}$ and  $j \in\mathcal{W}_{t+1}$, its $(i,j)$th entry $T(i,j)$ is
	\begin{equation}
	\begin{split}
	T(i,j)&=\sum_{k \in\mathcal{W}_{t+1}} A^{\alpha,\beta}_{t+1}(i,k)Q(k,j)\\
	&=\sum_{(i,k)\in\mathcal{E}_{t+1}}\! Q(k,j).\\
	\end{split}
	\end{equation}
Note that the entry $Q(k,j)$ has only three possible values: 0, $\frac{3}{2(q+2)}$, and  $\frac{1}{2(q+2)}$. For  $k=j$, $Q(k,j)=\frac{3}{2(q+2)}$. For  $k\neq j$, one can distinguish two cases: if  $k\sim j$, $Q(k,j)=\frac{1}{2(q+2)}$; $Q(k,j)=0$ otherwise.  For $A^{\alpha,\beta}_{t+1}(i,j)$, it is $1$ if $i \sim j$, and it is $0$ if $i \nsim j$.

In order to determine $T(i,j)$, it suffices to distinguish two cases:  $i \sim j$ and  $i \nsim j$.  For the first case of $i \sim j$,
	\begin{equation}
	\begin{split}
	T(i,j)&=\sum_{(i,k)\in\mathcal{E}_{t+1}}\! Q(k,j)\\
	&=\frac{3}{2(q+2)}+(q-1)\frac{1}{2(q+2)}\\
	&=\frac{1}{2}= \frac{1}{2}A^{\alpha,\beta}_{t+1}(i,j).
	\end{split}
	\end{equation}
For the second case of node $i \nsim j$, one has $j \nsim k$ and $Q(k,j)=0$. Thus,
\begin{equation}
	T(i,j)=\sum_{(i,k)\in\mathcal{E}_{t+1}}\!Q(k,j)=0= \frac{1}{2}A^{\alpha,\beta}_{t+1}(i,j).
\end{equation}
Then,
	\begin{equation}
	T=A^{\alpha,\beta}_{t+1}\left((q+1)I-A^{\beta,\beta}_{t+1}\right)^{-1}=\frac{1}{2}A^{\alpha,\beta}_{t+1}.
	\end{equation}
This competes the proof.
\end{proof}

\section{Proof of Theorem~\ref{lemma14}}\label{AppC}

	\begin{proof}
		First,  establish the recursion relation governing $R^{\ast}_{\mathcal{V}_{t+1},\mathcal{V}_{t+1}}(t+1)$
		and $R^{\ast}_{\mathcal{V}_{t},\mathcal{V}_{t}}(t)$. 
		By definition,
		\begin{equation}\label{eq141}
			\begin{split}
				R^{\ast}_{\mathcal{V}_{t+1},\mathcal{V}_{t+1}}(t+1)=&R^{\ast}_{\alpha,\alpha}(t+1)+R^{\ast}_{\alpha,\beta}(t+1)+\\
				&R^{\ast}_{\beta,\alpha}(t+1)+R^{\ast}_{\beta,\beta}(t+1),
			\end{split}
		\end{equation}
		in which the four terms on the right-hand side are 	 evaluated as follows.
		
		For the first term, by Lemma~\ref{lemma7},
		\begin{equation}\label{eq142}
			\begin{split}
				R^{\ast}_{\alpha,\alpha}(t+1)&=\sum_{\substack{i\in\mathcal{V}_t\\j\in\mathcal{V}_t}}d^{(t+1)}_id^{(t+1)}_j\Omega^{(t+1)}_{ij}\\
				&=(q+1)^2\!\times\!\frac{2}{q+2}\sum_{\substack{i\in\mathcal{V}_t\\j\in\mathcal{V}_t}}\!d^{(t)}_id^{(t)}_j\Omega^{(t)}_{ij}\\
				&=\frac{2(q+1)^2}{q+2}R^{\ast}_{\mathcal{V}_{t},\mathcal{V}_{t}}(t).
			\end{split}
		\end{equation}
		
		The second and third terms are equivalent  to each other. 	By  using Lemma~\ref{lemma10} and the fact that $d^{(t+1)}_i=q+1$ for any node $i\in\mathcal{W}_{t+1}$, one has
		\begin{small}
			\begin{align}\label{eq143}
				&R^{\ast}_{\alpha,\beta}(t+1)=R^{\ast}_{\beta,\alpha}(t+1)\notag\\
				=&\sum_{\substack{i\in\mathcal{W}_{t+1}\\j\in\mathcal{V}_t}}d^{(t+1)}_id^{(t+1)}_j\Omega^{(t+1)}_{ij}\notag\\
				=&\!\sum_{\substack{i\in\mathcal{W}_{t+1}\\j\in\mathcal{V}_t}}\!\!\!(q\!+\!1)\left((q\!+\!1)d^{(t)}_j\right)\frac{1}{2}\left(\frac{3}{q\!+\!2}\!-\!\frac{1}{2}\Omega^{(t+1)}_{\Delta_{i}}\!\!+\!\Omega^{(t+1)}_{\Delta_{i},j}\right)\notag\\
				=&\frac{q+1}{2}\sum_{i\in\mathcal{W}_{t+1}}\left(\frac{3}{q+2}-\frac{1}{2}\Omega^{(t+1)}_{\Delta_{i}}\right)\left(\sum_{j\in\mathcal{V}_{t}}(q+1)d^{(t)}_j\right)\notag\\
				&+\frac{(q+1)^2}{2}\sum_{\substack{i\in\mathcal{W}_{t+1}\\j\in\mathcal{V}_t}}d^{(t)}_j\Omega^{(t+1)}_{\Delta_{i},j}.
			\end{align}
		\end{small}
		Considering  Lemmas~\ref{lemma12} and~\ref{lemma13},~\eqref{eq143} can be written as
		\begin{small}
			\begin{align}\label{eq144}
				R^{\ast}_{\alpha,\beta}(t+1)=&(q+1)^2M_t\left(\frac{3}{q+2}W_{t+1}-\frac{q(N_t-1)}{q+2}\right)\notag\\
				&+\frac{q(q+1)^2}{2}\sum_{i,j\in\mathcal{V}_{t}}d^{(t)}_id^{(t)}_j\Omega^{(t+1)}_{ij}\notag\\
				=&(q\!+\!1)^2M_t\left(\frac{3}{q+2}W_{t+1}\!-\!\frac{q(N_t\!-\!1)}{q\!+\!2}\right)\notag\\
				&+\frac{q(q\!+\!1)^2}{q\!+\!2}R^{\ast}_{\mathcal{V}_{t},\mathcal{V}_{t}}(t).
			\end{align}
		\end{small}
Next, continue to determine the last term. For convenience,  use $i\sim j$ ($i\nsim j$) to denote that nodes $i$ and $j$ are  adjacent (non-adjacent). Then,
		\begin{align}\label{eq:Rast1}
			R^{\ast}_{\beta,\beta}(t+1)\notag=&\sum_{\substack{i,j\in\mathcal{W}_{t+1}\\i\neq j\\i\sim j}}d^{(t+1)}_id^{(t+1)}_j\Omega^{(t+1)}_{ij}\notag\\
			&+\sum_{\substack{i,j\in\mathcal{W}_{t+1}\\i\nsim j}}d^{(t+1)}_id^{(t+1)}_j\Omega^{(t+1)}_{ij}
		\end{align}
		Considering Lemma~\ref{lemma8}, the first term on the right-hand side of~\eqref{eq:Rast1} can be further written as
		\begin{align}\label{eq:term1}
			&\sum_{\substack{i,j\in\mathcal{W}_{t+1}\\i\neq j\\i\sim j}}d^{(t+1)}_id^{(t+1)}_j\Omega^{(t+1)}_{ij}\notag\\
			=&(q+1)^2\sum_{\substack{i,j\in\mathcal{W}_{t+1}\\i\neq j\\i\sim j}}\frac{2}{q+2}\notag\\
			=&\frac{2(q+1)^2}{q+2}W_{t+1}(q-1).
		\end{align}
		By using Lemma~\ref{lemma11}, the second term on the right-hand side of~\eqref{eq:Rast1} can be further written as
		\begin{small}
			\begin{align}\label{eq:ijnsim}
				&\sum_{\substack{i,j\in\mathcal{W}_{t+1}\\i\nsim j}}\!\!d^{(t+1)}_id^{(t+1)}_j\Omega^{(t+1)}_{ij}\notag\\
				=&(q\!+\!1)^2\!\!\!\sum_{\substack{i,j\in\mathcal{W}_{t\!+\!1}\\i\nsim j}}\!\!\left(\frac{3}{q\!+\!2}\!-\!\frac{1}{4}\left(\Omega^{(t+1)}_{\Delta_{i}}\!+\!\Omega^{(t+1)}_{\Delta_{j}}\right)\!+\!\frac{1}{4}\Omega^{(t+1)}_{\Delta_{i},\Delta_{j}}\right)\notag\\
				=&\frac{3(q\!+\!1)^2}{q\!+\!2}W_{t+1}(W_{t+1}\!\!-\!\!q)\!-\!\frac{(q\!+\!1)^2}{2}(W_{t+1}\!-\!1)\!\!\!\!\sum_{i\in\mathcal{W}_{t\!+\!1}}\!\!\!\Omega^{(t+1)}_{\Delta_i}\notag\\
				&+\frac{(q\!+\!1)^2}{4}\left(\sum_{i,j\in\mathcal{W}_{t\!+\!1}}\!\!\Omega^{(t+1)}_{\Delta_i,\Delta_j}\!-\!\sum_{i\in\mathcal{W}_{t\!+\!1}}\!\!\Omega^{(t+1)}_{\Delta_i,\Delta_i}\right).
			\end{align}
		\end{small}	
		From Lemma~\ref{lemma13} and Lemma~\ref{lemma7}, one has
		\begin{align}\label{eq:dxdy}
			\sum_{i,j\in\mathcal{W}_{t\!+\!1}}\Omega^{(t+1)}_{\Delta_i,\Delta_j}=&\sum_{j\in\mathcal{W}_{t+1}}\sum_{i\in\mathcal{W}_{t+1}}\Omega^{(t+1)}_{\Delta_i,\Delta_j}\notag\\
			=&\sum_{j\in\mathcal{W}_{t+1}}\sum_{x\in\mathcal{V}_{t}}qd_x^{(t)}\Omega_{x,\Delta_j}^{(t+1)}\notag\\
			=&\sum_{x\in\mathcal{V}_{t}}qd_x^{(t)}\sum_{j\in\mathcal{W}_{t+1}}\Omega_{x,\Delta_j}^{(t+1)}\notag\\
			=&q^2\sum_{x,y\in\mathcal{V}_{t}}d^{(t)}_xd^{(t)}_y\Omega^{(t+1)}_{xy}\notag\\
			=&\frac{2q^2}{q+2}\sum_{x,y\in\mathcal{V}_{t}}d^{(t)}_xd^{(t)}_y\Omega^{(t)}_{xy}\notag\\
			=&\frac{2q^2}{q+2}R^{\ast}_{\mathcal{V}_t,\mathcal{V}_t}(t).
		\end{align}
		Using $\sum_{i\in\mathcal{W}_{t+1}}\Omega^{(t+1)}_{\Delta_{i},\Delta_{i}}=2\sum_{i\in\mathcal{W}_{t+1}}\Omega^{(t+1)}_{\Delta_{i}}$, Lemma~\ref{lemma12} and~\eqref{eq:dxdy},~\eqref{eq:ijnsim} is reduced to
		\begin{align}\label{eq:term2}
			&\sum_{\substack{i,j\in\mathcal{W}_{t+1}\\i\nsim j}}d^{(t+1)}_id^{(t+1)}_j\Omega^{(t+1)}_{ij}\notag\\
			=&\frac{(q+1)^2}{q+2}W_{t+1}\big(3W_{t+1}-3q-q(N_t-1)\big)\notag\\
			&+\frac{q^2(q+1)^2}{2(q+2)}R^{\ast}_{\mathcal{V}_t,\mathcal{V}_t}(t)
		\end{align}
		With~\eqref{eq:term1} and~\eqref{eq:term2}, $R^{\ast}_{\beta,\beta}(t+1)$ can be further evaluated by
		\begin{align}\label{eq145}
			&R^{\ast}_{\beta,\beta}(t+1)\notag\\
			=&\frac{(q+1)^2}{q+2}W_{t+1}\big(3W_{t+1}-q-2-q(N_t-1)\big)\notag\\
			&+\frac{q^2(q+1)^2}{2(q+2)}R^{\ast}_{\mathcal{V}_t,\mathcal{V}_t}(t).
		\end{align}
			
Substituting~\eqref{eq142},~\eqref{eq144} and~\eqref{eq145} into~\eqref{eq141} leads to
		\begin{equation*}
			\begin{split}
				&R^{\ast}_{\mathcal{V}_{t+1},\mathcal{V}_{t+1}}(t+1)\\
				=&\frac{(q+2)(q+1)^2}{2}R^{\ast}_{\mathcal{V}_t,\mathcal{V}_t}(t)\\
				&-\frac{q(q+2)^2(q+1)^3}{q+3}\left(\frac{(q+1)(q+2)}{2}\right)^t\\
				&+\frac{q(3q+7)(q+2)^2(q+1)^4}{4(q+3)}\left(\frac{(q+1)(q+2)}{2}\right)^{2t}.
			\end{split}
		\end{equation*}
		which, under the condition $R^{\ast}_{\mathcal{V}_0,\mathcal{V}_0}(0)=2(q+1)^3$, can be solved to  obtain the desired result.
	\end{proof}

\section{Proof of Theorem~\ref{lemma15}}\label{AppD}

\begin{proof}
First, establish the recursive relation between $R^+_{\mathcal{V}_t,\mathcal{V}_t}(t)$ and $R^+_{\mathcal{V}_{t+1},\mathcal{V}_{t+1}}(t+1)$. By definition, one has
	\begin{align}\label{eq151}
	&R^{+}_{\mathcal{V}_{t+1},\mathcal{V}_{t+1}}(t+1)\notag\\
=&R^{+}_{\alpha,\alpha}(t\!+\!1)\!+\!R^{+}_{\alpha,\beta}(t\!+\!1)\!+\!R^{+}_{\beta,\alpha}(t\!+\!1)\!+\!R^{+}_{\beta,\beta}(t\!+\!1)\notag\\
	=&R^{+}_{\alpha,\alpha}(t+1)+2R^{+}_{\alpha,\beta}(t+1)+R^{+}_{\beta,\beta}(t+1).
	\end{align}
For the three terms on the right-hand side of~\eqref{eq151}, it is straightforward to verify that
	\begin{equation*}
	\begin{split}
	&R^{+}_{\alpha,\alpha}(t+1)=\frac{2(q+1)}{q+2}R^{+}_{\alpha,\alpha}(t),\\
	&2R^{+}_{\alpha,\beta}(t+1)=2\sum_{\substack{i\in\mathcal{V}_{t},\\j\in\mathcal{W}_{t+1}}}\left((q+1)+d^{(t+1)}_i\right)\Omega^{(t+1)}_{ij}\\
	&=2(q+1)R_{\alpha,\beta}(t+1)+\frac{2}{q+1}R^{\ast}_{\alpha,\beta}(t+1),\\
	&R^{+}_{\beta,\beta}(t+1)=\frac{2}{q+1}R^{\ast}_{\beta,\beta}(t+1).
	\end{split}
	\end{equation*}
Then, the quantity $R^{+}_{\mathcal{V}_{t+1},\mathcal{V}_{t+1}}(t+1)$ can be computed by
	\begin{equation}\label{eq:12x}
	\begin{split}
	&R^{+}_{\mathcal{V}_{t+1},\mathcal{V}_{t+1}}(t+1)=\\
	&\frac{2(q+1)}{q+2}R^{+}_{\alpha,\alpha}(t)+2(q+1)R_{\alpha,\beta}(t+1)\\
	&+\frac{2}{q+1}R^{\ast}_{\alpha,\beta}(t+1)+\frac{2}{q+1}R^{\ast}_{\beta,\beta}(t+1).
	\end{split}
	\end{equation}
The quantities on the right-hand side of~\eqref{eq:12x} have been previously determined, with the exception of  $R_{\alpha,\beta}(t+1)$.  Hence, in order to evaluate $R^{+}_{\mathcal{V}_{t+1},\mathcal{V}_{t+1}}(t+1)$, one only needs to  evaluate $R_{\alpha,\beta}(t+1)$. Note that
	\begin{equation}\label{eq152}
	\begin{split}
	\sum_{i,j\in\mathcal{V}_{t}}d^{(t)}_i\Omega^{(t)}_{ij}&=\frac{1}{2}\sum_{i,j\in\mathcal{V}_{t}}(d^{(t)}_i+d^{(t)}_j)\Omega^{t}_{ij}\\
	&=\frac{1}{2}R^{+}_{\mathcal{V}_{t},\mathcal{V}_{t}}(t).
	\end{split}
	\end{equation}
Making use of  Lemmas~\ref{lemma10},~\ref{lemma12} and~\ref{lemma13} and~\eqref{eq152}, one obtains
	\begin{equation}\label{eq153}
	\begin{split}
	&R_{\alpha,\beta}(t+1)=R_{\beta,\alpha}(t+1)
	=\sum_{\substack{i\in\mathcal{W}_{t+1}\\j\in\mathcal{V}_t}}\Omega^{(t+1)}_{ij}\\
	=&\sum_{\substack{i\in\mathcal{W}_{t+1}\\j\in\mathcal{V}_t}}\frac{1}{2}\left(\frac{3}{q+2}-\frac{1}{2}\Omega^{(t+1)}_{\Delta_{i}}+\Omega^{(t+1)}_{\Delta_{i},j}\right)\\
	=&\sum_{\substack{i\in\mathcal{W}_{t+1}\\j\in\mathcal{V}_t}}\!\!\frac{1}{2}\left(\frac{3}{q\!+\!2}\!-\!\frac{1}{2}\Omega^{(t+1)}_{\Delta_{i}}\right)\!+\!\frac{q}{q\!+\!2}\sum_{i,j\in\mathcal{V}_{t}}d^{(t)}_i\Omega^{(t)}_{ij}\\
	=&\frac{1}{2}N_t\!\left(\frac{3}{q\!+\!2}W_{t+1}\!-\!\frac{q(N_t\!-\!1)}{q\!+\!2}\right)\!\!+\!\frac{q}{2(q\!+\!2)}R^{+}_{\mathcal{V}_t,\mathcal{V}_t}(t).
	\end{split}
	\end{equation}
Inserting the above-obtained results into~\eqref{eq151} gives
	\begin{small}
		\begin{equation}\label{eqx}
		\begin{aligned}
		&R^{+}_{\mathcal{V}_{t+1},\mathcal{V}_{t+1}}(t+1)\\
		=&(q+1)R^{+}_{\mathcal{V}_t,\mathcal{V}_t}(t)+q(q+1)R^{\ast}_{\mathcal{V}_t,\mathcal{V}_t}(t)\\
		&+(q+1)(N_t+2M_t)\left(\frac{3}{q+2}W_{t+1}-\frac{q(N_t-1)}{q+2}\right)\\
		&+\frac{2(q+1)}{q+2}W_{t+1}\big(3W_{t+1}-q-2-q(N_t-1)\big).
		\end{aligned}
		\end{equation}
	\end{small}
	Considering  $R^{+}_{\mathcal{V}_{0},\mathcal{V}_{0}}(0)=4(q+1)^2$ and Theorem~\ref{lemma14}, ~\eqref{eqx} can be solved to obtain~\eqref{eqxxx}.
\end{proof}

\section{Proof of Theorem~\ref{lemma16}}\label{AppE}

\begin{proof}
By definition, one has
\begin{align}\label{eq161}
&R_{\mathcal{V}_{t+1},\mathcal{V}_{t+1}}(t+1)\notag\\
=&R_{\alpha,\alpha}(t+1)+2R_{\alpha,\beta}(t+1)+R_{\beta,\beta}(t+1)\\
=&\frac{2}{q\!+\!2}R_{\mathcal{V}_{t},\mathcal{V}_{t}}(t)\!+\!2R_{\alpha,\beta}(t\!+\!1)\!+\!\frac{1}{(q\!+\!1)^2}R^{\ast}_{\beta,\beta}(t\!+\!1)\notag.
\end{align}
Inserting~\eqref{eq145} and~\eqref{eq153} into~\eqref{eq161}, one obtains the recursive relation for $R_{\mathcal{V}_{t},\mathcal{V}_{t}}(t)$ as
\begin{small}
	\begin{align}\label{eq162}
	&R_{\mathcal{V}_{t+1},\mathcal{V}_{t+1}}(t+1)\notag\\
	=&\frac{2}{q+2}R_{\mathcal{V}_{t},\mathcal{V}_{t}}(t)+\frac{q}{q+2}R^{+}_{\mathcal{V}_{t},\mathcal{V}_{t}}(t)+\frac{q^2}{2(q+2)}R^{\ast}_{\mathcal{V}_{t},\mathcal{V}_{t}}(t)\notag\\
	&+N_t\left(\frac{3}{q+2}W_{t+1}-\frac{q(N_t-1)}{q+2}\right)\notag\\
	&+\frac{1}{q+2}W_{t+1}\Big(3W_{t+1}-q-2-q(N_t-1)\Big).
	\end{align}
\end{small}
Using Theorems~\ref{lemma14} and~\ref{lemma15} and $R_{\mathcal{V}_{0},\mathcal{V}_{0}}(0)=2(q+1)$,~\eqref{eq162} can be solved recursively to obtain~\eqref{eq160}.
\end{proof}

\bibliographystyle{compj}
\bibliography{Triangulation,Hitting}

\end{document}